\DeclareMathAlphabet{\can}{OT1}{cmss}{m}{n}
\newtheorem{thm}{Theorem}[section]
\newtheorem{cor}[thm]{Corollary}
\newtheorem{lem}[thm]{Lemma}
\newtheorem{prop}[thm]{Proposition}
\newtheorem{rem}[thm]{Remark}
\newtheorem{exa}[thm]{Example}
\theoremstyle{definition}
\newtheorem{defn}[thm]{Definition}
\theoremstyle{fact}
\theoremstyle{conjecture}
\numberwithin{equation}{section}
\newcommand{\ord}{\operatorname{ord}}
\begin{document}

\title[Generator polynomials]{Generator polynomials of cyclic expurgated or extended Goppa codes}

\author [Jia] {Xue Jia}
\address{\rm School of Mathematics, Nanjing University of Aeronautics and Astronautics, Nanjing,  211100, P. R. China and State Key Laboratory of Cryptology, P.O. Box 5159, Beijing 100878, P. R. China}
\email{jiaxue0904@163.com}

\author[Li]{Fengwei Li}
\address{\rm
 College of Science, Nanjing University of Posts and Telecommunications, Nanjing, 210023, P. R. China} \email{lfwzzu@126.com}

\author [Sun] {Huan Sun}
\address{\rm School of Mathematics, Nanjing University of Aeronautics and Astronautics, Nanjing,  211100, P. R. China and State Key Laboratory of Cryptology, P.O. Box 5159, Beijing 100878, P. R. China}
\email{sunhuan6558@163.com}
\author[Yue]{Qin Yue}
\address{\rm School of Mathematics, Nanjing University of Aeronautics and Astronautics, Nanjing,  211100, P. R. China and State Key Laboratory of Cryptology, P.O. Box 5159, Beijing 100878, P. R. China}
\email{yueqin@nuaa.edu.cn}

\thanks{The paper is supported by the National Natural Science Foundation of China ( No. 62172219 and No.
 12171420), the Natural Science Foundation of Shandong Province under Grant ZR2021MA046, the Natural Science Foundation of Jiangsu Province under Grant BK20200268, the Innovation Program for Quantum Science and Technology under Grant 2021ZD0302902.\\
(*Corresponding authours: Fengwei Li and Qin Yue)}

 \keywords{Cyclic codes, Expurgated Goppa codes, Extended Goppa codes, Generator polynomials}

\subjclass[2010]{94B05}


\begin{abstract}
Classical Goppa codes are a well-known class of codes with applications in code-based cryptography, which are a special case of alternant codes. Many papers are devoted to the search for Goppa codes with a cyclic extension or with a cyclic parity-check subcode.
Let $\Bbb F_q$ be a finite field with $q=2^l$ elements, where  $l$ is a positive integer. In this paper, we determine all the generator polynomials of cyclic expurgated or extended Goppa codes under some prescribed permutations induced by the projective general linear automorphism $A \in PGL_2(\Bbb F_q)$.  Moreover, we provide some examples to support our findings.
\end{abstract}
\maketitle
\section{Introduction}
In coding theory, cyclic codes \cite{CZ1,HP,XL} are an interesting type of linear codes and have wide applications in communication and storage systems due to their efficient encoding and decoding algorithms. Let $\Bbb F_q$ be a finite field with $q$ elements, where $q=p^s$, $p$ is a prime, and $s$ is a positive integer. An $[n,k,d]$ linear code $\mathcal C$ is a $k$-dimensional subspace of $\Bbb F_q^n$ with minimum distance $d$. It is called cyclic if $(c_0,c_1,\ldots,c_{n-1}) \in \mathcal C$ implies $(c_{n-1},c_0,c_1,\ldots,c_{n-2}) \in \mathcal C$. By identifying the codeword $(c_0,c_1,\ldots,c_{n-1}) \in \Bbb F_q^n$ with the polynomial
$$
c_0+c_1x+c_2x^2+\cdots+c_{n-1}x^{n-1} \in \Bbb F_q[x]/(x^n-1),
$$
a code $\mathcal C$ of length $n$ over $\Bbb F_q$ corresponds to a subset of $\Bbb F_q[x]/(x^n-1)$. Then $\mathcal C$ is cyclic if and only if the corresponding subset is an ideal of $\Bbb F_q[x]/(x^n-1)$. Note that every ideal of $\Bbb F_q[x]/(x^n-1)$ is principal. Hence $\mathcal C=\langle u(x) \rangle$, where $u(x)$ is a monic divisor of $x^n-1$ over $\Bbb F_q$. Then $u(x)$ is called the generator polynomial and $h(x)=(x^n-1)/u(x)$ is called the parity-check polynomial of the cyclic code $\mathcal C$. A cyclic code is totally determined by its generator polynomial, so it is significant to investigate the generator polynomial of a cyclic code.

Goppa codes \cite{GV} are subfield subcodes of generalized Reed-Solomon (GRS) codes and a special case of alternant codes \cite{HP,MS,XL}. In recent decades, Goppa codes have received extensive attention in the field of coding and cryptography.
In 1978 \cite{MC}, McEliece introduced the first code-based cryptosystem, in which he used a random binary irreducible Goppa code as the private key. Now the principle of the McEliece system still plays an important role in the design of code-based cryptography. In particular, the Classic McEliece originated from McEliece is one of the four candidate algorithms for the fourth round of the National Institute of Standards and Technology (NIST) post-quantum cryptography competition, and is considered to be very promising to be standardized by NIST \cite{P}.
In terms of structural cryptanalysis, there are several effective attacks on families associated with random irreducible Goppa codes \cite{Si,WC,CA,FJ,BR}. Moreover, there exist some key-recovery approaches that work on unstructured Goppa codes, but whose complexity is exponential at least in the more general framework \cite{K,CR}. It can be seen that the key using Goppa codes is sufficiently secure.
 Recall that there are two general constructions in coding theory that allow to define new codes from a given code, namely expurgated and extended codes, which satisfy certain linear constraints and are designed to increase code redundancy.
 In summary, it is very meaningful to study the structures and properties of these algebraic codes.


It is well known that some of Goppa codes have a cyclic extension (cf. \cite{MS}, Research Problem 12.3). Many papers are devoted to the search for such families of Goppa codes \cite{BM},\cite{FT},\cite{S,T,TY,TZ,V}. In addition, cyclic Goppa codes, cyclic expurgated and extended Goppa codes were also investigated in \cite{Ba,Bb,Bc,Bd,Be} and \cite{LY}. Specifically, in \cite{Bd}, Berger proved that the parity-check subcodes of Goppa codes and the extended Goppa codes are both alternant codes and gave new families of Goppa codes with a cyclic extension, and some families of non-cyclic Goppa codes with a cyclic parity-check subcode.
 Berger presented the construction of Goppa codes invariant under some prescribed permutation \cite{Bc}. Moreover, Berger constructed some other families of Goppa codes with a cyclic extension and gave a necessary and sufficient condition for a Goppa code to have a cyclic extension \cite{Bb}.
 However, it is not yet known the generator polynomials of these cyclic codes.
 Based on the above results, we shall determine all the generator polynomials of cyclic expurgated or extended Goppa codes under some prescribed permutations induced by the projective general linear automorphism $A \in PGL_2(\Bbb F_q)$.

This paper is organized as follows. In section 2, we recall some definitions of Goppa codes, expurgated and extended Goppa codes. Besides, we recall the method to construct cyclic expurgated or extended Goppa codes. In section 3, we show generator polynomials of cyclic expurgated or extended Goppa codes and provide some examples to support our findings. In section 4, we conclude the paper.
\section{Preliminaries}
In this paper, we always assume that $q=2^l$, $\Bbb F_q$ denotes a finite field with $2^l$ elements, $\overline{\Bbb F}_q=\Bbb F_q \bigcup \{\infty\}$ is a set of coordinates for the projective line, and $\Bbb F_q[x]$ is the polynomial ring over $\Bbb F_q$. We shall review some basic notations and necessary preliminaries.
\subsection{\small{Goppa Codes, Expurgated and Extended Goppa codes}}\

In this subsection, we shall recall the definitions of Goppa codes, expurgated and extended Goppa codes, which can be found in \cite{Bc,Bd,HP,MS,R,XL}.
\begin{defn}\label{def 2.1}
Let $\mathcal L=(\alpha_1,\ldots, \alpha_n)$ be an $n$-tuple of distinct elements of $\Bbb F_q$ and $g(x)=\sum\limits_{i=0}^r g_ix^i \in \Bbb F_q[x]$ a polynomial of degree $r(<n)$ such that $g(\alpha_i) \neq 0$ for $i=1, \ldots, n$. The Goppa code $\Gamma(\mathcal L, g)$ with the support $\mathcal L$ and the Goppa polynomial $g(x)$ is defined as
$$
\Gamma(\mathcal L, g)=\left\{\boldsymbol c=(c_1, c_2, \ldots, c_n) \in \Bbb F_2^n: \sum_{i=1}^{n} \frac{c_i}{x-\alpha_i} \equiv 0 \pmod {g(x)}\right\}.
$$
The expurgated Goppa code $\widetilde{\Gamma}(\mathcal L, g)$ of $\Gamma(\mathcal L, g)$ is defined as
$$
\widetilde{\Gamma}(\mathcal L, g)=\left\{\boldsymbol c=(c_1, c_2, \ldots, c_n) \in \Gamma(\mathcal L, g): \sum_{i=1}^n c_i=0\right\}.
$$
The extended Goppa code $\overline{\Gamma}(\mathcal L, g)$ is defined as
$$
\overline{\Gamma}(\mathcal L, g)=\left\{\boldsymbol c=(c_1, \ldots, c_n, c_{n+1}) \in \Bbb F_2^{n+1}: (c_1, c_2, \ldots, c_n) \in \Gamma(\mathcal L, g), \sum_{i=1}^{n+1} c_i=0\right\}.
$$

Moreover, the parity-check matrices of $\Gamma(\mathcal L, g), \widetilde{\Gamma}(\mathcal L, g)$, and $\overline{\Gamma}(\mathcal L, g)$ are given by
$$
H=\left(\begin{array}{cccc}
g(\alpha_1)^{-1} & g(\alpha_2)^{-1} & \cdots & g(\alpha_n)^{-1} \\
\alpha_1g(\alpha_1)^{-1} & \alpha_2g(\alpha_2)^{-1} & \cdots & \alpha_ng(\alpha_n)^{-1} \\
\vdots & \vdots & & \vdots \\
\alpha_1^{r-1}g(\alpha_1)^{-1} & \alpha_2^{r-1}g(\alpha_2)^{-1} & \cdots & \alpha_n^{r-1}g(\alpha_n)^{-1}
\end{array}\right),
$$
\begin{equation} \label{widetilde H}
\widetilde H=\left(\begin{array}{cccc}
g(\alpha_1)^{-1} & g(\alpha_2)^{-1} & \cdots & g(\alpha_n)^{-1} \\
\alpha_1g(\alpha_1)^{-1} & \alpha_2g(\alpha_2)^{-1} & \cdots & \alpha_ng(\alpha_n)^{-1} \\
\vdots & \vdots & & \vdots \\
\alpha_1^r g(\alpha_1)^{-1} & \alpha_2^r g(\alpha_2)^{-1} & \cdots & \alpha_n^r g(\alpha_n)^{-1}
\end{array}\right),
\end{equation}
\begin{equation}
\overline H=\left(\begin{array}{ccccc}
g(\alpha_1)^{-1} & g(\alpha_2)^{-1} & \cdots & g(\alpha_n)^{-1} & 0\\
\alpha_1g(\alpha_1)^{-1} & \alpha_2g(\alpha_2)^{-1} & \cdots & \alpha_ng(\alpha_n)^{-1} & 0\\
\vdots & \vdots & & \vdots & \vdots\\
\alpha_1^{r-1}g(\alpha_1)^{-1} & \alpha_2^{r-1}g(\alpha_2)^{-1} & \cdots & \alpha_n^{r-1}g(\alpha_n)^{-1} & 0 \\
\alpha_1^r g(\alpha_1)^{-1} & \alpha_2^r g(\alpha_2)^{-1} & \cdots & \alpha_n^r g(\alpha_n)^{-1} & g(\infty)^{-1}
\end{array}\right),
\end{equation}
where $g(\infty)=g_r$.
If $g(x)$ is irreducible over $\Bbb F_q$, then $\Gamma(\mathcal L, g)$, $\widetilde{\Gamma}(\mathcal L, g)$, and $\overline{\Gamma}(\mathcal L, g)$ are called irreducible.
\end{defn}
\begin{lem}(\cite{MS}) \label{lem Goppa}
Let $\Gamma(\mathcal L, g)$ be a binary Goppa code in Definition \ref{def 2.1}. Suppose that the Goppa polynomial $g(x) \in \Bbb F_q[x]$ has no multiple zeros, then $\Gamma(\mathcal L, g)=\Gamma(\mathcal L, g^2)$.
\end{lem}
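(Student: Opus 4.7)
The inclusion $\Gamma(\mathcal L,g^2)\subseteq\Gamma(\mathcal L,g)$ is immediate from the definitions, since any congruence modulo $g(x)^2$ implies the corresponding congruence modulo $g(x)$. The real content of the lemma is the reverse inclusion, and the plan is to exploit the binary nature of the code together with the characteristic-$2$ Frobenius.

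The central tool is the ``locator polynomial.'' For $\boldsymbol c=(c_1,\dots,c_n)\in\Gamma(\mathcal L,g)$, I would set
$$\sigma_{\boldsymbol c}(x)=\prod_{i=1}^{n}(x-\alpha_i)^{c_i}\in\Bbb F_q[x],$$
noting that $c_i\in\{0,1\}$ makes this a product of distinct linear factors from $\mathcal L$. A direct logarithmic-derivative computation gives
$$\frac{\sigma_{\boldsymbol c}'(x)}{\sigma_{\boldsymbol c}(x)}=\sum_{i=1}^{n}\frac{c_i}{x-\alpha_i}.$$
Because $g(\alpha_i)\neq 0$ for every $i$, we have $\gcd(g(x),\sigma_{\boldsymbol c}(x))=1$, so the Goppa condition $\sum_i c_i/(x-\alpha_i)\equiv0\pmod{g(x)}$ is equivalent to $g(x)\mid\sigma_{\boldsymbol c}'(x)$. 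This reformulation is the pivot of the proof.

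Next I would invoke characteristic $2$. If $\sigma_{\boldsymbol c}(x)=\sum_k a_k x^k$, then $\sigma_{\boldsymbol c}'(x)=\sum_{k\text{ odd}}a_k x^{k-1}$, which only contains even powers of $x$. Hence $\sigma_{\boldsymbol c}'(x)=u(x^2)$ for some $u(x)\in\Bbb F_q[x]$, and using that the Frobenius $a\mapsto a^2$ is a bijection on $\Bbb F_q$, every coefficient of $u$ admits a square root in $\Bbb F_q$. Therefore one can write
$$\sigma_{\boldsymbol c}'(x)=v(x)^2\quad\text{for some } v(x)\in\Bbb F_q[x].$$

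The last step combines this with the squarefreeness hypothesis on $g(x)$. From $g(x)\mid\sigma_{\boldsymbol c}'(x)=v(x)^2$ and the fact that each irreducible factor of $g$ appears with multiplicity $1$, each such irreducible factor must already divide $v(x)$; hence $g(x)\mid v(x)$, and consequently $g(x)^2\mid v(x)^2=\sigma_{\boldsymbol c}'(x)$. Rewriting this as $\sum_i c_i/(x-\alpha_i)\equiv 0\pmod{g(x)^2}$ yields $\boldsymbol c\in\Gamma(\mathcal L,g^2)$, completing the proof. The only subtle point is the passage $\sigma_{\boldsymbol c}'(x)=v(x)^2$ with $v\in\Bbb F_q[x]$; this is where both characteristic $2$ and the surjectivity of Frobenius on $\Bbb F_q$ are essential, and it is the step most likely to be mishandled if one works over a subfield that is not perfect.
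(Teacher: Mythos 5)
Your proof is correct. The paper itself gives no argument for this lemma --- it is quoted directly from MacWilliams--Sloane --- and your locator-polynomial argument (reduce the Goppa condition to $g\mid\sigma_{\boldsymbol c}'$, observe that in characteristic $2$ over the perfect field $\Bbb F_q$ the derivative $\sigma_{\boldsymbol c}'$ is a perfect square, and conclude $g^2\mid\sigma_{\boldsymbol c}'$ from squarefreeness of $g$) is exactly the classical proof in that reference.
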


\subsection{\small{Action of Groups}}\

In this subsection, we shall recall the notations of some matrix groups, the actions of the projective linear group and the projective semi-linear group on $\overline{\Bbb F}_q=\Bbb F_q \bigcup \{\infty\}$.

There are some matrix groups as follows:

(1) The affine group
$$
AGL_2(\Bbb F_q)=\left\{A=\left(\begin{array}{cc}
a & b \\
0 & 1
\end{array}\right): a \in \Bbb F_q^*, b \in \Bbb F_q\right\}.
$$

(2) The general linear group
$$
GL_2(\Bbb F_q)=\left\{A=\left(\begin{array}{cc}
a & b \\
c & d
\end{array}\right): a, b, c, d \in \Bbb F_q, ad-bc \neq 0\right\}.
$$

(3) The projective general linear group
$$
PGL_2(\Bbb F_q)=GL_2(\Bbb F_q)/\left\{k E_2: k \in \Bbb F_q^*\right\},
$$
where $E_2$ is the $2\times 2$ identity matrix.

(4) The projective semi-linear group
$$
P\Gamma L_2(\Bbb F_q)=PGL_2(\Bbb F_q)\times Gal(\Bbb F_q/\Bbb F_2),
$$
where $Gal(\Bbb F_q/\Bbb F_2)=\langle\sigma\rangle$ is the Galois group.

Let $\overline{\Bbb F}_q=\Bbb F_q \bigcup \{\infty\}$ be the projective line set and $A=\left(\begin{array}{cc}
a & b \\
c & d
\end{array}\right)\in GL_2(\Bbb F_q)$. Then the projective general linear group $PGL_2(\Bbb F_q)$ acts on $\overline{\Bbb F}_q$ as follows:
$$
\begin{aligned}
 PGL_2(\Bbb F_q) \times \overline{\Bbb F}_q &\rightarrow \overline{\Bbb F}_q \\
\widetilde A (\zeta) &\mapsto \widetilde A(\zeta)=\frac{a \zeta+b}{c \zeta+d}.
\end{aligned}
$$
The projective semi-linear group $P\Gamma L_2(\Bbb F_q)$ acts on $\overline{\Bbb F}_q$ as follows:
$$
\begin{aligned}
 P\Gamma L_2(\Bbb F_q) \times \overline{\Bbb F}_q &\rightarrow \overline{\Bbb F}_q \\
(\widetilde A,\sigma^j) (\zeta) &\mapsto \widetilde A(\sigma^j(\zeta))=\frac{a \zeta^{2^j}+b}{c \zeta^{2^j}+d},
\end{aligned}
$$
where $\frac10=\infty, \frac1\infty=0$ and $\frac{a\infty+b}{c\infty+d}=\frac ac.$

Let $M=(A, \sigma^j)\in P\Gamma L_2(\Bbb F_q)$ be of order $n$ and $G=\langle M\rangle $  a cyclic subgroup of $P\Gamma L_2(\Bbb F_q)$, then for $\alpha\in \overline {\Bbb F}_q$,
$$O_{\alpha}=G(\alpha)=\{M^i(\alpha): 0\le i\le n-1\}$$
is called an orbit of $\alpha$ under the action of $G$.

\subsection{\small{Cyclic Expurgated or Extended Goppa Codes}}\

In this subsection, we shall recall the method to construct cyclic expurgated or extended Goppa codes. In fact, we can    choose some special Goppa polynomials and support sets to obtain cyclic expurgated or extended Goppa codes. There are many articles (\cite{Bb,Bd,Be}) describing such results.
\begin{lem}\label{lem cyclic}
 Let $M=\left(\left(\begin{array}{cc}
a & b \\
c & d
\end{array}\right),\sigma^j\right)\in P\Gamma L_2(\Bbb F_q)$ with $c\ne 0$, $G=\langle M\rangle$ a subgroup of $P\Gamma L_2(\Bbb F_q)$, $O_{\alpha}=\{\alpha_1, \ldots, \alpha_n\}$  an orbit of $\alpha\in\overline {\Bbb F}_q$ under the action of $G$, and $g(x)=\sum\limits_{i=0}^r g_i x^i \in \Bbb F_q[x]$  a monic polynomial of degree $r(<n)$ without roots in $O_{\alpha}$ satisfying

\begin{equation}\label{eq 2.1}
c^r g(\frac ac) \neq 0 ~~\text{and} ~~
(cx^{2^j}+d)^r g\left(\frac{ax^{2^j}+b}{cx^{2^j}+d}\right)=c^r g(\frac ac)g(x)^{2^j}.
\end{equation}

(1) If $\infty \notin\mathcal L=O_{\alpha}$, then the expurgated Goppa code $\widetilde{\Gamma}(\mathcal L, g)$ is cyclic of length $n$.

(2) If $\infty \in O_{\alpha}$,  $\mathcal L=O_{\alpha} \backslash \{\infty\}$, then the extended Goppa code $\overline{\Gamma}(\mathcal L, g)$ is cyclic of length $n$.
\end{lem}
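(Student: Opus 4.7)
The plan is to order the orbit $O_\alpha$ cyclically according to the action of $M$, so that the permutation of coordinates induced by $M$ becomes a cyclic shift of length $n$. In case (1), I would set $\alpha_{i+1} = M(\alpha_i)$ with indices modulo $n$; in case (2), I would additionally place $\infty$ at position $n$, so that $\mathcal L = O_\alpha \setminus \{\infty\}$ occupies positions $1, \ldots, n-1$ while the extra coordinate of the extended code sits at position $n$. The task then reduces to verifying that each code is invariant under this cyclic permutation, after which the auxiliary condition $\sum_i c_i = 0$ in the expurgated case is automatic because the shift is a permutation.

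The key reformulation is that a binary vector $c$ lies in $\widetilde{\Gamma}(\mathcal L, g)$ iff $\sum_i c_i\, p(\alpha_i)/g(\alpha_i) = 0$ for every $p \in \Bbb F_q[x]$ with $\deg p \leq r$, so invariance under the shift amounts to the same identity with each $\alpha_i$ replaced by $M(\alpha_i)$. The functional equation (\ref{eq 2.1}) gives
\[
\frac{p(M(\alpha_i))}{g(M(\alpha_i))} \;=\; \frac{Q_p(\alpha_i^{2^j})}{c^{r}\, g(a/c)\, g(\alpha_i)^{2^j}},
\]
where $Q_p(y) = \sum_k p_k (ay+b)^k (cy+d)^{r-k}$ is a polynomial of degree at most $r$. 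Expanding $Q_p(y) = \sum_s q_s y^s$, the shifted sum becomes a nonzero scalar multiple of $\sum_s q_s \bigl(\sum_i c_i \alpha_i^s/g(\alpha_i)\bigr)^{2^j}$ (using $c_i \in \Bbb F_2$ and additivity of the Frobenius $z \mapsto z^{2^j}$), and this vanishes by the parity checks of $\widetilde{\Gamma}(\mathcal L, g)$. This completes case (1).

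The main obstacle is case (2), where the unique orbit element $\alpha_{n-1}$ with $M(\alpha_{n-1}) = \infty$ breaks the displayed formula and the extra coordinate $c_n$ attached to $\infty$ must be woven into the bookkeeping. To reconcile these, I would evaluate (\ref{eq 2.1}) at $x = \alpha_{n-1}$ (where $c\alpha_{n-1}^{2^j} + d = 0$); using the monicity $g_r = 1$, the evaluation produces the identity $g(\alpha_{n-1})^{2^j} = (ad+bc)^r/\bigl(c^{2r} g(a/c)\bigr)$. This identity lets me reinstate the $i = n-1$ term in the shifted parity-check sum, and at that point the $c_{n-1}$ overflow term cancels exactly against the contribution coming from $c_n$ and the last row of $\overline H$ (whose entry at position $n$ is $g(\infty)^{-1} = 1/g_r = 1$). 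The remaining sum then ranges over all $n-1$ finite positions and, exactly as in case (1), vanishes via Frobenius together with the parity checks of $\overline{\Gamma}(\mathcal L, g)$.
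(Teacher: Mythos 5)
Your proof is correct and follows essentially the same route as the paper's: order the support along the $M$-orbit, use the functional equation (\ref{eq 2.1}) to rewrite each shifted parity check as a Frobenius power of an $\Bbb F_q$-linear combination of the original checks (your polynomial $Q_p$ of degree at most $r$ playing the role of the paper's elementary row operations on $\widetilde H$), and conclude cyclicity from invariance under the shift. The only substantive difference is that you carry out case (2) in detail --- including the correct identity $g(\alpha_{n-1})^{2^j}=(ad+bc)^r/\bigl(c^{2r}g(a/c)\bigr)$ obtained by evaluating the cleared form of (\ref{eq 2.1}) at $\alpha_{n-1}$, and the use of $g(\infty)^{-1}=g_r^{-1}=1$ --- whereas the paper disposes of that case with ``similarly.''
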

\begin{proof}
(1) Since $O_{\alpha}=\{\alpha_1, \ldots, \alpha_n\}$ is an orbit of $\alpha\in\overline {\Bbb F}_q$ under the action of $G$,
$$\alpha_{i+1}=M \alpha_i=\frac{a\alpha_i^{2^j}+b}{c\alpha_i^{2^j}+d}, i=1,\ldots, n,$$
where $\alpha_{n+1}=\alpha_1$.

If $\widetilde H$ is  the parity-check matrix of $\widetilde{\Gamma}(\mathcal L, g)$ in (\ref{widetilde H}), where $\infty \notin\mathcal L=O_{\alpha}$,  then by (2.3),
\begin{eqnarray*}
\widetilde H&=&\left(\begin{array}{cccc}
g(M\alpha_{n})^{-1} & g(M\alpha_1)^{-1} & \cdots & g(M\alpha_{n-1})^{-1} \\
M\alpha_{n}g(M\alpha_n)^{-1} & M\alpha_1g(M\alpha_1)^{-1} & \cdots & M\alpha_{n-1}g(M\alpha_{n-1})^{-1} \\
\vdots & \vdots & & \vdots \\
(M\alpha_n)^r g(M\alpha_n)^{-1} & (M\alpha_1)^r g(M\alpha_1)^{-1} & \cdots & (M\alpha_{n-1})^r g(M\alpha_{n-1})^{-1}
\end{array}\right)\\
&=&\frac 1{c^r g(\frac ac)}\left(\begin{array}{cccc}
\frac{(c\alpha_n^{2^j}+d)^r}{g(\alpha_n)^{2^j}} & \frac{(c\alpha_1^{2^j}+d)^r}{g(\alpha_1)^{2^j}} & \cdots & \frac{(c\alpha_{n-1}^{2^j}+d)^r}{g(\alpha_{n-1})^{2^j}} \\
\frac{(a\alpha_{n}^{2^j}+b)(c\alpha_{n}^{2^j}+d)^{r-1}}{g(\alpha_{n})^{2^j}} & \frac{(a\alpha_{1}^{2^j}+b)(c\alpha_{1}^{2^j}+d)^{r-1}}{g(\alpha_{1})^{2^j}} & \cdots & \frac{(a\alpha_{n-1}^{2^j}+b)(c\alpha_{n-1}^{2^j}+d)^{r-1}}{g(\alpha_{n-1})^{2^j}}  \\
\vdots & \vdots & & \vdots \\
\frac{(a\alpha_n^{2^j}+b)^r}{g(\alpha_n)^{2^j}} & \frac{(a\alpha_1^{2^j}+b)^r}{g(\alpha_1)^{2^j}} & \cdots & \frac{(a\alpha_{n-1}^{2^j}+b)^r}{g(\alpha_{n-1})^{2^j}}
\end{array}\right).
\end{eqnarray*}
By some elementary row transformations, $\widetilde H$ can be changed into
$$\widetilde H_1=\left(\begin{array}{cccc}
\frac{1}{g(\alpha_n)^{2^j}} & \frac{1}{g(\alpha_1)^{2^j}} & \cdots & \frac{1}{g(\alpha_{n-1})^{2^j}} \\
\frac{\alpha_{n}^{2^j}}{g(\alpha_{n})^{2^j}} & \frac{\alpha_{1}^{2^j}}{g(\alpha_{1})^{2^j}} & \cdots & \frac{\alpha_{n-1}^{2^j}}{g(\alpha_{n-1})^{2^j}}  \\
\vdots & \vdots & & \vdots \\
\frac{(\alpha_n^r)^{2^j}}{g(\alpha_n)^{2^j}} & \frac{(\alpha_1^r)^{2^j}}{g(\alpha_1)^{2^j}} & \cdots & \frac{(\alpha_{n-1}^r)^{2^j}}{g(\alpha_{n-1})^{2^j}}
\end{array}\right).
$$
Then
$\boldsymbol x=(x_1,\ldots,x_n)\in \widetilde{\Gamma}(\mathcal L, g) \subseteq \Bbb F_2^n$ if and only if $\widetilde H \boldsymbol x^\mathrm T=0$ if and only if $$
\widetilde H_1 (x_n, x_1,\ldots, x_{n-1})^\mathrm T=0$$
if and only if $(x_n, x_1,\ldots, x_{n-1})\in \widetilde{\Gamma}(\mathcal L, g)$.
Hence $\widetilde{\Gamma}(\mathcal L, g)$ is cyclic.

(2) For $\infty \in O_{\alpha}$, without loss of generality, let $\alpha_n=\infty$
and  $O_{\alpha}=\{\alpha_1, \ldots, \alpha_{n-1},\alpha_n=\infty\}$.
Similarly, we know that $\overline {\Gamma}(\mathcal L, g)$ is also cyclic.
\end{proof}

\begin{lem}\label{lem g(x)} (\cite[Theorem 6, Proposition 7]{Bd})

(1) Let $M=(A, \sigma^j) \in P\Gamma L_2(\Bbb F_q)$, $\alpha$ be an element of an extension of $\Bbb F_q$, and $O_{\alpha}(x)=\prod\limits_{i=0}^{n-1}(x-M^i(\alpha))$, where $n$ is the length of the orbit of $\alpha$ under $G=\langle M\rangle$. Let $g(x)=\sum\limits_{i=0}^r g_i x^i \in \Bbb F_q[x]$ be a monic polynomial of degree $r$. Then $g(x)$ satisfies the condition in (\ref{eq 2.1}) if and only if $g(x)$ is a product of polynomials $O_{\alpha_k}(x)$ for some $\alpha_k$.

(2) Let $M \in P\Gamma L_2(\Bbb F_q)$. Let $g_1(x)$ and $g_2(x)$ be two polynomials satisfying the condition in (\ref{eq 2.1}). If $\deg g_1(x)+\deg g_2(x)<n$, then $g(x)=g_1(x)g_2(x)$ satisfies the condition in (\ref{eq 2.1}).
\end{lem}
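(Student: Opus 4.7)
My plan is to prove part (1) by a direct root-by-root comparison of the two sides of condition (\ref{eq 2.1}), and part (2) by a short multiplicativity argument.

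For part (1), I would factor $g(x)=\prod_{k=1}^{r}(x-\beta_k)$ over an algebraic closure of $\Bbb F_q$ and substitute into the left-hand side of (\ref{eq 2.1}). Each factor $(cx^{2^j}+d)\bigl(\tfrac{ax^{2^j}+b}{cx^{2^j}+d}-\beta_k\bigr)$ simplifies to $(a-c\beta_k)x^{2^j}+(b-d\beta_k)$. The nonvanishing hypothesis $c^r g(a/c)\neq 0$ forces $a-c\beta_k\neq 0$ for every $k$, so this factor rewrites as $(a-c\beta_k)\bigl(x^{2^j}-\tfrac{d\beta_k-b}{a-c\beta_k}\bigr)$. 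The key observation is that in characteristic $2$, $x^{2^j}-\gamma=(x-\gamma^{1/2^j})^{2^j}$, and the scalar $\tfrac{d\beta_k-b}{a-c\beta_k}$ is precisely $\bigl(M^{-1}(\beta_k)\bigr)^{2^j}$, so the factor becomes $(a-c\beta_k)(x-M^{-1}(\beta_k))^{2^j}$. Collecting the constants via $\prod_k(a-c\beta_k)=c^r g(a/c)$, the full left-hand side of (\ref{eq 2.1}) equals $c^r g(a/c)\cdot\bigl(\prod_k(x-M^{-1}(\beta_k))\bigr)^{2^j}$. Matching this against the right-hand side $c^r g(a/c)\,g(x)^{2^j}$, (\ref{eq 2.1}) holds if and only if the multiset $\{M^{-1}(\beta_k)\}_k$ coincides with $\{\beta_k\}_k$, i.e., $M$ permutes the roots of $g$. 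Equivalently, the root set is a disjoint union of $M$-orbits, so $g(x)$ is a product of orbit polynomials $O_{\alpha_k}(x)$; conversely, any such product visibly has $M$-stable roots and satisfies (\ref{eq 2.1}) by reversing the computation.

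For part (2), I would simply multiply the two relations: if $g_1,g_2$ satisfy (\ref{eq 2.1}) with degrees $r_1,r_2$, then
\[
(cx^{2^j}+d)^{r_1+r_2}(g_1g_2)\Bigl(\tfrac{ax^{2^j}+b}{cx^{2^j}+d}\Bigr)=c^{r_1+r_2}(g_1g_2)(a/c)\,(g_1g_2)(x)^{2^j},
\]
and the nonvanishing of $c^{r_1+r_2}(g_1g_2)(a/c)$ follows from that of each factor. The degree hypothesis $r_1+r_2<n$ is only there to keep $g=g_1g_2$ within the admissible degree range of a Goppa polynomial supported on an orbit of length $n$. The main obstacle is the characteristic-$2$ step in part (1): recognising that the preimage of $\beta_k$ under $M=(A,\sigma^j)$ collapses to the single point $M^{-1}(\beta_k)$ with multiplicity $2^j$, so that a polynomial identity of degree $r\cdot 2^j$ reduces to a comparison of two degree-$r$ polynomials; once this is in hand, the equivalence is just equality of multisets of roots.
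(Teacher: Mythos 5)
The paper does not actually prove this lemma: it is quoted verbatim from Berger (\cite[Theorem 6, Proposition 7]{Bd}) and used as a black box, so there is no in-paper argument to compare against. Your proof is correct and self-contained, and it is the natural one: writing $g(x)=\prod_k(x-\beta_k)$, checking that $(cx^{2^j}+d)\bigl(\tfrac{ax^{2^j}+b}{cx^{2^j}+d}-\beta_k\bigr)=(a-c\beta_k)x^{2^j}+(b-d\beta_k)=(a-c\beta_k)(x-M^{-1}(\beta_k))^{2^j}$, collecting $\prod_k(a-c\beta_k)=c^rg(a/c)$, and reducing (\ref{eq 2.1}) to the statement that $M$ permutes the multiset of roots, hence that the multiplicity function is constant on orbits. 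Two small points are worth making explicit. First, the conclusion of the multiset comparison is that $g$ is a product of orbit polynomials \emph{with repeated factors allowed} (e.g.\ $g=g_1^sg_2^t$ later in the paper), which is how the statement must be read. Second, in the ``if'' direction you still owe the nonvanishing half of (\ref{eq 2.1}): this follows because $a/c=M(\infty)$ lies in the orbit of $\infty$, while each $O_{\alpha_k}(x)$ is by definition attached to an orbit consisting of finite points, so $a/c$ is never a root of such a product. Neither point is a real gap, and part (2) is exactly the one-line multiplicativity argument one would expect.
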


In Lemma \ref{lem cyclic} and \ref{lem g(x)}, a method for constructing cyclic expurgated or extended Goppa codes is described, but their generator polynomials are not determined.
Naturally,  we shall consider it.

{\bf Question:} How are their generator polynomials given in Lemmas \ref{lem cyclic} and \ref{lem g(x)}?

In detail,
   for $M=\left( \left(\begin{array}{cc}
a & b \\
c & d
\end{array}\right),\sigma^j \right) \in P\Gamma L_2(\Bbb F_q)$ and $g(x)=\sum\limits_{i=0}^r g_i x^i \in\Bbb F_q[x]$ a monic polynomial of degree $r$, there are two questions as follows:

(1) If $\sigma^j=1$, i.e., $M=A\in PGL_2(\Bbb F_q)$,  can we give generator polynomials of cyclic expurgated or extended Goppa codes?

(2) If $\sigma^j\ne 1$, i.e., $M\notin PGL_2(\Bbb F_q)$,  can we obtain generator polynomials of cyclic expurgated or extended Goppa codes?

In this paper,  we shall  determine  generator polynomials of cyclic expurgated or extended Goppa codes in the case of $\sigma^j=1$.

\section{Generator polynomials of Cyclic Expurgated or Extended Goppa Codes}

In this section, we shall give all the generator polynomials of cyclic expurgated or extended Goppa codes in the case of $\sigma^j=1$ of Section 2.3.

First of all, we shall describe the partitions of $\overline{\Bbb F}_{q^2}$ and $\overline{\Bbb F}_q$, respectively. Thus by Lemma \ref{lem g(x)}, we can obtain all Goppa polynomials $g(x)$ satisfying the condition in (\ref{eq 2.1}). Besides, we could select the support set $\mathcal L$ of a Goppa code such that its expurgated or extended code is cyclic.
 The results described in Lemmas 3.2 and 3.3 of \cite{LYa} are important.
 For completeness, we include their proofs here.

 Let  $
PGL_2(\Bbb F_q)=GL_2(\Bbb F_q)/\left\{k E_2: k \in \Bbb F_q^*\right\}
$ be projective  general linear group.  If $\widetilde A\in PGL_2(\Bbb F_q)$ and $A_1=\left(\begin{array}{ll}a_1 &b_1\\c_1&d_1\end{array}\right)$ is a representative element of the class $\widetilde A$, then $|A_1|=a_1d_1+b_1c_1=\Delta=\delta^2\ne 0$, $\delta\in\Bbb F_q$.  For convenience, we always choose the representative element $A=\frac 1\delta A_1$ of the class $\widetilde A$ with $|A|=1$.

\begin{lem}\label{lem LX1}(\cite[Lemma 3.2]{LYa})
Let
$A=\left(\begin{array}{cc}
a & b \\
c & d
\end{array}\right) \in PGL_2(\Bbb F_q)$ and $|A|=ad+bc=1$.  Then

(1) $\ord(A)=2$ if and only if $a=d$ and $(b,c)\ne (0,0)$.

(2) If $\ord(A)=n>2$  and $|\lambda E_2-A|=(\lambda-\rho)(\lambda-\rho^{-1})$, where $\rho,\rho^{-1}\in\Bbb F_{q^2}$, then $\ord(A)=\ord(\rho)$ and either $n\mid (q-1)$ or $n\mid (q+1)$.
\end{lem}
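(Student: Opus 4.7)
For part (1), my plan is to compute $A^2$ explicitly in characteristic $2$. Using $ad+bc=1$, the diagonal entries of $A^2$ are $a^2+bc$ and $d^2+bc$, and the off-diagonal entries are $b(a+d)$ and $c(a+d)$. Thus $A^2$ is a scalar matrix (i.e.\ represents the identity in $PGL_2(\Bbb F_q)$) if and only if either $a+d=0$ or $b=c=0$. In the forward direction, if $a=d$ and $(b,c)\ne(0,0)$, then $A^2=(a^2+bc)E_2$ but $A$ itself is not a scalar matrix, so $\ord(A)=2$. Conversely, if $\ord(A)=2$, I will first rule out $b=c=0$: in that case $A=\mathrm{diag}(a,d)$ with $ad=1$, which is projectively conjugate to $\mathrm{diag}(1,d/a)$; its square being identity forces $(d/a)^2=1$, so $d/a=1$ in characteristic $2$, making $A$ the identity of $PGL_2(\Bbb F_q)$, a contradiction. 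Hence $(b,c)\ne(0,0)$, which forces $a+d=0$, i.e.\ $a=d$.

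For part (2), I will work with the characteristic polynomial $|\lambda E_2-A|=\lambda^2+(a+d)\lambda+1$ (in characteristic $2$, using $|A|=1$). Since $\ord(A)>2$, part (1) gives $a\ne d$, so $\rho+\rho^{-1}=a+d\ne 0$, ensuring $\rho\ne \rho^{-1}$ and hence $A$ is diagonalizable over $\Bbb F_{q^2}$. Choose $P\in GL_2(\Bbb F_{q^2})$ with $P^{-1}AP=\mathrm{diag}(\rho,\rho^{-1})$. Then $A^k$ is a scalar matrix iff $\rho^k=\rho^{-k}$, i.e.\ iff $\rho^{2k}=1$. Since $q=2^l$, the order $|\Bbb F_{q^2}^*|=q^2-1$ is odd, so $\ord(\rho)$ is odd; consequently $\rho^{2k}=1$ is equivalent to $\ord(\rho)\mid k$. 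This yields $\ord(A)=\ord(\rho)$.

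To finish part (2), I will split on whether $\rho\in\Bbb F_q$. If $\rho\in\Bbb F_q^*$, then $\ord(\rho)\mid q-1$. Otherwise $\rho\in\Bbb F_{q^2}\setminus\Bbb F_q$, and the two roots of the characteristic polynomial are Galois conjugate over $\Bbb F_q$, so $\rho^{-1}=\rho^q$; this gives $\rho^{q+1}=1$ and hence $\ord(\rho)\mid q+1$. In either case $n=\ord(A)=\ord(\rho)$ divides $q-1$ or $q+1$.

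The only subtle step is the backward direction of part (1): one has to explicitly exclude the diagonal case $b=c=0$, because otherwise the vanishing of the off-diagonal entries of $A^2$ does not by itself force $a=d$. Everything else reduces to routine matrix arithmetic and the standard fact that the multiplicative order in a field of odd order behaves well under squaring.
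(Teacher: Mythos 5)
Your proof is correct and follows essentially the same route as the paper's: part (1) by direct computation of $A^2$ in characteristic $2$, and part (2) by diagonalizing $A$ over $\Bbb F_{q^2}$, reducing $\ord(A)$ to $\ord(\rho)$ via the oddness of $q^2-1$, and obtaining $n\mid(q-1)$ or $n\mid(q+1)$ according to whether the characteristic polynomial splits over $\Bbb F_q$. One cosmetic remark: your intermediate claim that $A^2$ is scalar if and only if $a+d=0$ or $b=c=0$ fails in the backward direction when $b=c=0$ and $a\ne d$, but you only ever use the forward implication and you separately rule out the diagonal case, so the argument stands and is in fact more explicit there than the paper's one-line assertion.
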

\begin{proof}
(1) For $A=\left(\begin{array}{cc}
a & b \\
c & d
\end{array}\right) \in PGL_2(\Bbb F_q)$, it is clear that $\ord(A)=2$ if and only if
$A^2=\left(\begin{array}{cc}
a^2+bc & b(a+d) \\
c(a+d) & d^2+bc
\end{array}\right)= k E_2$ and  $A\ne k'E_2$, $k, k'\in \Bbb F_q^*$,  if and only if  $a=d$ and $(b, c)\ne (0,0)$.

(2) If $\ord(A)=n>2$, then by $|A|=1$,
$$
|\lambda E_2-A| = \left|\begin{array}{cccc}
 \lambda+a &     b  \\
    c      &  \lambda+d
\end{array}\right| = \lambda^2+(a+d)\lambda+1 = (\lambda-\rho)(\lambda-\rho^{-1}),
$$
where $\rho,\rho^{-1} \in \Bbb F_{q^2}^*$ are two eigenvalues of $A$ such that $\rho+\rho^{-1}=a+d$. In fact, if the above characteristic polynomial is reducible over $\Bbb F_q$, then $\ord(\rho)=\ord(\rho^{-1})=n$ and $n\mid(q-1)$; if it is irreducible over $\Bbb F_q$, then $\ord(\rho)=\ord(\rho^{-1})=n$ and $n\mid(q+1)$.

 By (1) and $a\ne d$,  $\rho \neq \rho^{-1}$.  Then there is an invertible matrix $P\in PGL_2(\Bbb F_{q^2})$ such that $P^{-1}AP=\left(\begin{array}{cc}
\rho & 0 \\
0 & \rho^{-1}
\end{array}\right)$. It has that $A^k=P\left(\begin{array}{cc}
\rho^k & 0 \\
0 & \rho^{-k}
\end{array}\right)P^{-1}=cE_2, c \in \Bbb F_{q}^*,$ if and only if $\rho^k=\rho^{-k}$ if and only if $\rho^{2k}=1$. Hence $\ord(A) = \ord(\rho^2)=\ord(\rho)=n$.
\end{proof}

In the following, we always assume that $c \ne 0$ and $\ord(A)= n>2$, which implies $\rho \neq \rho^{-1}$. It is worth mentioning that we just consider the case $c \ne 0$ in the case of $\sigma^j=1$ of Section 2.3 because  it is  very simple if $c=0$. So  we omit it here.
\begin{lem}\label{lem LX2}(\cite[Lemma 3.3]{LYa})
Let
$A=\left(\begin{array}{cc}
a & b \\
c & d
\end{array}\right) \in PGL_2(\Bbb F_q)$ be of order $n>2$ with $|A|=ad+bc=1$ and $c\ne 0$.  Let $\alpha$ be an element of an extension of $\Bbb F_q$. Then $A(\alpha)=\alpha$ if and only if $\alpha \in \{\frac{a+\rho}c, \frac{a+\rho^{-1}}c\}$; otherwise, $n$ is the least positive integer such that $A^n(\alpha)=\alpha$, where $\rho,\rho^{-1} \in \Bbb F_{q^2}^*$ are two eigenvalues of $A$.
\end{lem}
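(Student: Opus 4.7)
The plan is to split the lemma into its two assertions and handle them separately.

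For the characterization of fixed points, I would reduce $A(\alpha)=\alpha$ to a quadratic by clearing the denominator in the M\"obius formula: the equation $(a\alpha+b)/(c\alpha+d)=\alpha$ becomes $c\alpha^{2}+(a+d)\alpha+b=0$ in characteristic~$2$. Lemma \ref{lem LX1} together with $|A|=1$ gives the Vieta relations $\rho+\rho^{-1}=a+d$ and $\rho\rho^{-1}=1$ from the characteristic polynomial $\lambda^{2}+(a+d)\lambda+1$. It then suffices to verify Vieta's formulas for the two candidates $(a+\rho)/c$ and $(a+\rho^{-1})/c$: the sum collapses to $(a+d)/c$ because $2a=0$, and the product equals $(a^{2}+a(a+d)+1)/c^{2}=(ad+1)/c^{2}$, which simplifies to $b/c$ via $ad+bc=1$. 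Hence the two candidates are precisely the roots of the quadratic, so they are exactly the fixed points of $A$.

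For the minimality statement, I would diagonalize $A$ over $\Bbb F_{q^{2}}$. Since $n>2$ we have $\rho\neq\rho^{-1}$ by Lemma \ref{lem LX1}(1), so there exists $P\in PGL_{2}(\Bbb F_{q^{2}})$ with $P^{-1}AP=D:=\mathrm{diag}(\rho,\rho^{-1})$. The M\"obius action of $D$ on $\overline{\Bbb F}_{q^{2}}$ is $\beta\mapsto\rho^{2}\beta$, whose only fixed points are $0$ and $\infty$; these are carried by $P$ to the two fixed points of $A$ identified in the first part. If $\alpha$ is not a fixed point of $A$, then $\beta:=P^{-1}(\alpha)\notin\{0,\infty\}$, and the condition $A^{m}(\alpha)=\alpha$ becomes $\rho^{2m}\beta=\beta$, i.e.\ $\rho^{2m}=1$.

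To finish, I would use that $q=2^{l}$, so $q\pm 1$ are odd; by Lemma \ref{lem LX1}(2), $n$ divides one of them and is therefore odd, which forces $\mathrm{ord}(\rho^{2})=\mathrm{ord}(\rho)=n$. Consequently the least positive $m$ with $\rho^{2m}=1$ is $n$, which proves the claimed minimality. The only real obstacle I anticipate is the characteristic-$2$ bookkeeping in the Vieta computation; once the identities $\rho+\rho^{-1}=a+d$ and $\rho\rho^{-1}=1$ together with $ad+bc=1$ are in play, the diagonalization step is essentially forced by $\rho\neq\rho^{-1}$ and the oddness of $n$.
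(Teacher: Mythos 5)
Your proposal is correct. The minimality argument is essentially the paper's: both diagonalize $A$ over $\Bbb F_{q^2}$, reduce $A^{m}(\alpha)=\alpha$ to $\rho^{2m}\beta=\beta$ for $\beta=P^{-1}(\alpha)$, and use that $n\mid(q\pm1)$ is odd (so $\operatorname{ord}(\rho^{2})=\operatorname{ord}(\rho)=n$) to conclude. Where you differ is in identifying the fixed points: you clear denominators to get the quadratic $c\alpha^{2}+(a+d)\alpha+b=0$ and verify by Vieta (using $\rho+\rho^{-1}=a+d$, $\rho\rho^{-1}=1$, $ad+bc=1$) that $\frac{a+\rho}{c}$ and $\frac{a+\rho^{-1}}{c}$ are its two roots, whereas the paper instead exhibits the explicit conjugating matrix $P=\left(\begin{smallmatrix}a+\rho^{-1} & a+\rho\\ c & c\end{smallmatrix}\right)$, checks $P^{-1}AP=\mathrm{diag}(\rho,\rho^{-1})$ by direct computation, and reads off the fixed points as $P(0)$ and $P(\infty)$. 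Your Vieta route is a bit more self-contained for the lemma itself, but the paper's explicit $P$ is not wasted effort: it is reused immediately in Corollary \ref{cor orbit 1} to compute the orbit $O_{\infty}=\{P(\rho),\dots,P(\rho^{n-1}),P(1)=\infty\}$, which your abstract ``there exists $P$'' would not supply. Two small points of bookkeeping you should make explicit: the two candidate fixed points are distinct because $\rho\neq\rho^{-1}$ and $c\neq 0$ (needed to say they are \emph{precisely} the roots), and the points $\infty$ and $\alpha=d/c$ (where $c\alpha+d=0$) are not fixed since $c\neq0$, so the quadratic really does capture all fixed points on the projective line.
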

\begin{proof}
By Lemma \ref{lem LX1}, there is an invertible matrix $P$ such that $A=P \left(\begin{array}{cc}
\rho & 0 \\
0 & \rho^{-1}
\end{array}\right) P^{-1}$ and $A^i(\alpha)=P \left(\begin{array}{cc}
\rho^i & 0 \\
0 & \rho^{-i}
\end{array}\right) P^{-1}(\alpha), 1\leq i \leq n-1$, where
\begin{equation}\label {matrix P}
P=\left(\begin{array}{cc}
a+\rho^{-1}  & a+\rho \\
c & c
\end{array}\right).\end{equation}
In fact, since $\rho,\rho^{-1} \in \Bbb F_{q^2}^*$ are two eigenvalues of $A$,
$$
\begin{aligned}
P^{-1}AP&=\left(\begin{array}{cc}
c & a+\rho \\
c & a+\rho^{-1}
\end{array}\right)
\left(\begin{array}{cc}
a & b \\
c & d
\end{array}\right)
\left(\begin{array}{cc}
a+\rho^{-1}  & a+\rho \\
c & c
\end{array}\right)\\
&=\left(\begin{array}{cc}
c\rho  & d\rho+1 \\
c\rho^{-1} & d\rho^{-1}+1
\end{array}\right)
\left(\begin{array}{cc}
a+\rho^{-1}  & a+\rho \\
c & c
\end{array}\right)\\
&=\left(\begin{array}{cc}
c(a+d)\rho  & c(\rho^2+(a+d)\rho+1) \\
c(\rho^{-2}+(a+d)\rho^{-1}+1) & c(a+d)\rho^{-1}
\end{array}\right)\\
&=\left(\begin{array}{cc}
c(a+d)\rho  & 0 \\
0 & c(a+d)\rho^{-1}
\end{array}\right)=\left(\begin{array}{cc}
\rho  & 0 \\
0 & \rho^{-1}
\end{array}\right).
\end{aligned}
$$

For $1\leq i \leq n-1, A^i(\alpha)=\alpha$ if and only if $\rho^{2i}P^{-1}(\alpha)=P^{-1}(\alpha)$. By $\ord(\rho)=n$  odd, $A^i(\alpha)=\alpha$ if and only if $P^{-1}(\alpha) \in \{0, \infty\}$ if and only if $ \alpha \in \{\frac{a+\rho}c, \frac{a+\rho^{-1}}c\}$.

Therefore, $A(\alpha)=\alpha$ if and only if $\alpha \in \{\frac{a+\rho}c, \frac{a+\rho^{-1}}c\}$; otherwise, $n$ is the least positive integer such that $A^n(\alpha)=\alpha$.
\end{proof}
\begin{cor}\label{cor orbit 1}
Let $A$ be a matrix as in Lemma \ref{lem LX2} and $G=\langle A\rangle$ a subgroup of $PGL_2(\Bbb F_q)$. For each $\alpha \in \overline{\Bbb F}_{q^2}$, let $O_{\alpha}=\{A^i(\alpha) : A^i \in G\}$ be the orbit of $\alpha$ in $\overline{\Bbb F}_{q^2}$ under the action of $G$. Then there is a partition of $\overline{\Bbb F}_{q^2}$ as follows:
$$
\overline{\Bbb F}_{q^2}=\{\frac{a+\rho}c\}\bigcup \{\frac{a+\rho^{-1}}c\}\bigcup_{i=1}^{\frac{q^2-1}n} O_{\alpha_i},
$$
where $\alpha_i \notin \{\frac{a+\rho}c, \frac{a+\rho^{-1}}c\}$ and $|O_{\alpha_i}|=\ord(A)=n$.
Moreover,
$$O_{\infty}=\{P(\rho), P(\rho^2),\ldots, P(\rho^{n-1}),P(1)=\infty\},$$ where $P$ is defined as (\ref{matrix P}).
\end{cor}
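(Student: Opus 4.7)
The plan is to split the corollary into two independent claims and handle each with the diagonalization already set up in Lemma \ref{lem LX2}.

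First, for the partition of $\overline{\Bbb F}_{q^2}$, I would invoke Lemma \ref{lem LX2} directly. It identifies $\frac{a+\rho}{c}$ and $\frac{a+\rho^{-1}}{c}$ as the full list of $A$-fixed points in any extension of $\Bbb F_q$ (both lie in $\Bbb F_{q^2}$ since $\rho,\rho^{-1}\in\Bbb F_{q^2}^*$ and $c\ne 0$), and asserts that every other $\alpha\in\overline{\Bbb F}_{q^2}$ satisfies $A^k(\alpha)\ne\alpha$ for $1\le k\le n-1$, so its $G$-orbit has size exactly $n$. The two fixed points contribute the two singleton orbits in the displayed decomposition, and the remaining $q^2+1-2=q^2-1$ elements partition into full orbits of size $n$, producing $(q^2-1)/n$ of them. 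Divisibility is not automatic but is guaranteed by Lemma \ref{lem LX1}(2): $n\mid q-1$ or $n\mid q+1$, so $n\mid q^2-1$ in either case.

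Second, for the explicit description of $O_\infty$, I would start from the diagonalization $A=PDP^{-1}$ with $D=\left(\begin{array}{cc}\rho&0\\0&\rho^{-1}\end{array}\right)$ and $P$ as in (\ref{matrix P}), established inside the proof of Lemma \ref{lem LX2}. Evaluating the M\"obius action of $P$ at $z=1$ yields
$$P(1)=\frac{(a+\rho^{-1})+(a+\rho)}{c+c}=\frac{\rho+\rho^{-1}}{0};$$
since $n>2$ forces $\rho\ne\rho^{-1}$, the numerator is nonzero, so $P(1)=\infty$ and hence $P^{-1}(\infty)=1$. The diagonal matrix $D$ acts on the projective line by $z\mapsto\rho^{2}z$, so $A^i(\infty)=PD^iP^{-1}(\infty)=P(\rho^{2i})$ for all $i$. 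Finally, since $|\Bbb F_{q^2}^*|=q^2-1=2^{2l}-1$ is odd, $n=\ord(\rho)$ is odd, so multiplication by $2$ is a bijection on $\Bbb Z/n\Bbb Z$; then $\{\rho^{2i}:0\le i\le n-1\}=\{1,\rho,\rho^2,\ldots,\rho^{n-1}\}$, giving the stated description of $O_\infty$ (with $P(1)=\infty$ as the last entry).

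The only real subtlety is the identity $P(1)=\infty$: it depends on the characteristic-$2$ collapse $2a=2c=0$, which makes $1$ the unique $P$-preimage of $\infty$ and converts the $A$-orbit of $\infty$ into a $D$-orbit of $1$ where the action is simply scaling by $\rho^2$. The reindexing step $\{\rho^{2i}\}=\{\rho^j\}$ is similarly driven by the fact that all orders in $\Bbb F_{q^2}^*$ are odd when $q=2^l$, which is the only point where the standing hypothesis on $q$ enters this particular argument.
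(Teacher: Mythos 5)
Your proposal is correct and follows essentially the same route as the paper: both use Lemma \ref{lem LX2} to identify the two fixed points and the size-$n$ orbits of all other points, and both describe $O_\infty$ by conjugating $A$ to the diagonal matrix $\mathrm{diag}(\rho,\rho^{-1})$ via $P$ and using $P(1)=\infty$. Your explicit treatment of the $\rho^2$-versus-$\rho$ reindexing (via oddness of $n$) and of the divisibility $n\mid q^2-1$ merely spells out two points the paper leaves implicit.
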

\begin{proof} Let $G=\langle A\rangle $ be a cyclic subgroup of $PGL_2(\Bbb F_{q})$ and $\overline {\Bbb F}_{q^2}$ be the projective line set, then there is an action of $G$ on $\overline {\Bbb F}_{q^2}$:
\begin{eqnarray*} G\times \overline{\Bbb F}_{q^2} &&\rightarrow \overline{\Bbb F}_{q^2},\\
(A^i,\alpha) &&\mapsto A^i(\alpha).
\end{eqnarray*}
Then $\overline{\Bbb F}_{q^2}$ has a partition (in details, see \cite[Proposition 272]{RE}).
By Lemma \ref{lem LX2}, $\frac{a+\rho}c$ and $\frac{a+\rho^{-1}}c$ are the only two fixed points under the action of group $G=\langle A\rangle$; since $n$ is the least positive integer such that $A^n(\alpha)=\alpha$ if $\alpha \notin \{\frac{a+\rho}c, \frac{a+\rho^{-1}}c\}$,  $O_{\alpha}=G(\alpha)=\{\alpha,A(\alpha),\ldots,A^{n-1}(\alpha)\}$. Hence there is a partition of $\overline{\Bbb F}_{q^2}$ as follows:
$$
\overline{\Bbb F}_{q^2}=\{\frac{a+\rho}c\}\bigcup \{\frac{a+\rho^{-1}}c\}\bigcup_{i=1}^{\frac{q^2-1}n} O_{\alpha_i},
$$
where $\alpha_i \notin \{\frac{a+\rho}c, \frac{a+\rho^{-1}}c\}$ and $|O_{\alpha_i}|=\ord(A)=n$.

Let $\Lambda=\left(\begin{array}{cc}\rho &0\\0 &\rho^{-1}\end{array}\right)$ and $G'=\langle \Lambda\rangle$ a subgroup of $PGL_2(\Bbb F_{q^2})$, then $G'$ acts on the set $\overline {\Bbb F}_{q^2}$, so for $\beta\notin \{0,\infty\}$, there is an orbit of $\beta$ under the action of $G'$:
$$O_\beta=\{\beta, \rho\beta, \ldots, \rho^{n-1}\beta\}.$$
By $A=P\Lambda P^{-1}$, $P(0)=\frac{a+\rho}c$, $P(\infty)=\frac{a+\rho^{-1}}c$. Hence   for $\beta\notin\{0, \infty\}$, there is an orbit of $P(\beta) $ under the action of $G=\langle A\rangle $:
$$O_{P(\beta)}=\{P(\beta), P(\rho \beta),\ldots, P(\rho^{n-1} \beta)\}, $$
so $$O_{\infty}=\{ P(\rho ),\ldots, P(\rho^{n-1} ),P(1)=\infty\}. $$
\end{proof}

\begin{cor}\label{cor orbit 2}
Let $A$ be a matrix as in Lemma \ref{lem LX2} and $G=\langle A\rangle$ a subgroup of $PGL_2(\Bbb F_q)$. For each $\alpha \in \overline{\Bbb F}_{q}$, let $O_{\alpha}=\{A^i(\alpha) : A^i \in G\}$ be the orbit of $\alpha$ in $\overline{\Bbb F}_{q}$ under the action of $G$.

(1) If $\lambda^2+(a+d)\lambda+1 = (\lambda-\rho)(\lambda-\rho^{-1})$ is reducible over $\Bbb F_q$, then there is a partition of $\overline{\Bbb F}_{q}$ as follows:
$$
\overline{\Bbb F}_{q}=\{\frac{a+\rho}c\}\bigcup \{\frac{a+\rho^{-1}}c\}\bigcup_{i=1}^{\frac{q-1}n} O_{\alpha_i},
$$
where $\alpha_i \notin \{\frac{a+\rho}c, \frac{a+\rho^{-1}}c\}$ and $|O_{\alpha_i}|=\ord(A)=n$.

(2) If $\lambda^2+(a+d)\lambda+1 = (\lambda-\rho)(\lambda-\rho^{-1})$ is irreducible over $\Bbb F_q$, then there is a partition of $\overline{\Bbb F}_{q}$ as follows:
$$
\overline{\Bbb F}_{q}=\bigcup_{i=1}^{\frac{q+1}n} O_{\alpha_i},
$$
where $|O_{\alpha_i}|=\ord(A)=n$.
\end{cor}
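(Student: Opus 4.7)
The plan is to mirror the proof of Corollary \ref{cor orbit 1} (of $\overline{\Bbb F}_{q^2}$), but restrict the $G$-action to the invariant subset $\overline{\Bbb F}_q\subset\overline{\Bbb F}_{q^2}$. Since $A\in PGL_2(\Bbb F_q)$, the fractional linear transformation $\zeta\mapsto\frac{a\zeta+b}{c\zeta+d}$ sends $\overline{\Bbb F}_q$ into itself, so $G=\langle A\rangle$ acts on $\overline{\Bbb F}_q$, a set of cardinality $q+1$. By Lemma \ref{lem LX2}, an element $\alpha$ is fixed by $G$ if and only if $\alpha\in\{\tfrac{a+\rho}{c},\tfrac{a+\rho^{-1}}{c}\}$, and every non-fixed $\alpha$ satisfies $|O_\alpha|=n$. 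The whole proof is then an accounting of which of those two putative fixed points actually belong to $\overline{\Bbb F}_q$, followed by a counting of orbits.

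For case (1), where $\lambda^2+(a+d)\lambda+1=(\lambda-\rho)(\lambda-\rho^{-1})$ splits over $\Bbb F_q$, we have $\rho,\rho^{-1}\in\Bbb F_q^*$; since $a\in\Bbb F_q$ and $c\in\Bbb F_q^*$, both $\tfrac{a+\rho}{c}$ and $\tfrac{a+\rho^{-1}}{c}$ lie in $\Bbb F_q\subset\overline{\Bbb F}_q$. By Lemma \ref{lem LX1}(2), in this split case $n\mid(q-1)$. The remaining $(q+1)-2=q-1$ elements of $\overline{\Bbb F}_q$ partition into $G$-orbits, each of size $n$, producing exactly $(q-1)/n$ orbits $O_{\alpha_i}$, which gives the desired decomposition.

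For case (2), where the characteristic polynomial is irreducible over $\Bbb F_q$, the roots $\rho,\rho^{-1}$ lie in $\Bbb F_{q^2}\setminus\Bbb F_q$. Here the key observation is that because $a\in\Bbb F_q$ and $c\in\Bbb F_q^*$, the map $\rho\mapsto\tfrac{a+\rho}{c}$ is an $\Bbb F_q$-affine bijection, so $\tfrac{a+\rho^{\pm 1}}{c}\in\Bbb F_q$ would force $\rho^{\pm 1}\in\Bbb F_q$, a contradiction. Hence neither fixed point lies in $\overline{\Bbb F}_q$, so the action of $G$ on $\overline{\Bbb F}_q$ is free and every orbit has length $n$. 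By Lemma \ref{lem LX1}(2), in this non-split case $n\mid(q+1)$, so $\overline{\Bbb F}_q$ partitions into $(q+1)/n$ orbits, giving the claimed decomposition.

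The argument is almost entirely bookkeeping from the earlier lemmas, so there is no real obstacle; the one subtle point worth spelling out explicitly is that in case (2) the fixed points $\tfrac{a+\rho^{\pm 1}}{c}$ fall outside $\Bbb F_q$ precisely because $\rho$ does, which forces every $\Bbb F_q$-point to have orbit length exactly $n$ and thereby forces $n\mid(q+1)$ via simple counting even without invoking Lemma \ref{lem LX1}(2).
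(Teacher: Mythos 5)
Your argument is correct and is exactly the intended one: the paper states this corollary without proof as the restriction of Corollary \ref{cor orbit 1} to the $G$-invariant subset $\overline{\Bbb F}_q$, and your bookkeeping of which of the two fixed points $\frac{a+\rho}{c},\frac{a+\rho^{-1}}{c}$ lie in $\overline{\Bbb F}_q$ in the reducible versus irreducible case, followed by counting orbits of size $n$, is precisely what is needed. Your closing observation that the divisibility $n\mid(q+1)$ in case (2) also follows from the orbit count alone is a nice (if inessential) bonus.
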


\begin{prop} \label{prop g}
Let $A$ be a matrix as in Lemma \ref{lem LX2}, $\lambda^2+(a+d)\lambda+1=(\lambda-\rho)(\lambda-\rho^{-1})$,  $\rho,\rho^{-1}\in \Bbb F_{q^2}$, $G=\langle A\rangle$ a subgroup of $PGL_2(\Bbb F_q)$, and $g(x)=\sum\limits_{i=0}^r g_i x^i \in \Bbb F_{q^2}[x]$  a monic polynomial satisfying the condition in (\ref{eq 2.1}). Then both the expurgated Goppa code $\widetilde{\Gamma}(\mathcal L, g)$ and the extended Goppa code $\overline{\Gamma}(\mathcal L', g)$ are cyclic of length $n$, where
$$
\mathcal L'=O_{\infty}\backslash \{\infty\}=\{ P(\rho ),\ldots, P(\rho^{n-1} )\}=\{\gamma, A(\gamma), \ldots, A^{n-2}(\gamma)\} \subseteq \Bbb F_{q^2},
$$
 and
 $$\mathcal L=O_{\alpha}=\{\alpha_1=\alpha, \alpha_2=A(\alpha),\ldots,\alpha_n=A^{n-1}(\alpha)\} \subseteq \Bbb F_{q^2}$$
is the orbit of $\alpha$ under the action of $G$ if $\alpha \notin \{\frac{a+\rho}c,\frac{a+\rho^{-1}}c\}\cup O_{\infty}$.

(1) If the degree $r$ of $g(x)$ is not less than  $n-1$, i.e., $r\ge n-1$,  then both the expurgated Goppa code $\widetilde{\Gamma}(\mathcal L, g)$ and the extended Goppa code $\overline{\Gamma}(\mathcal L', g)$ are zero.

(2) If the degree $r$ of $g(x)$ is less than $n-1$, then
\begin{equation} \label{eq g}
g(x)= g_1(x)^sg_2(x)^t,
\end{equation}
where
\begin{equation} \label{eq g_i}
g_1(x)=x+\frac{a+\rho}c,~ g_2(x)=x+\frac{a+\rho^{-1}}c,
\end{equation}
and  $s+t <n-1$.
\end{prop}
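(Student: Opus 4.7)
The strategy is to combine the structural characterization of admissible Goppa polynomials from Lemma~\ref{lem g(x)} with the orbit decomposition of $\overline{\mathbb F}_{q^2}$ under $G=\langle A\rangle$ from Corollary~\ref{cor orbit 1}, and then to handle the boundary case $r\geq n-1$ by a direct rank argument on the parity-check matrices. The mild extension of Lemma~\ref{lem g(x)} from $g\in\mathbb F_q[x]$ to $g\in\mathbb F_{q^2}[x]$ is painless since its proof only uses $G$-invariance of the zero set of $g$, so any monic $g(x)$ satisfying~(\ref{eq 2.1}) must still be a product of orbit polynomials of $G$ on $\overline{\mathbb F}_{q^2}$.

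For part~(2), Corollary~\ref{cor orbit 1} shows that the orbits of $G$ come in three flavors: the two fixed singletons $\{\tfrac{a+\rho}{c}\}$ and $\{\tfrac{a+\rho^{-1}}{c}\}$, giving the linear factors $g_1(x)$ and $g_2(x)$; the orbit $O_\infty$ of size $n$, giving an orbit polynomial of degree $n-1$ (since $\infty$ contributes no linear factor); and every other orbit, of size $n$, giving an orbit polynomial of degree $n$. Under the hypothesis $r<n-1$, no factor of degree $n-1$ or $n$ can fit inside $g(x)$, so $g(x)=g_1(x)^s g_2(x)^t$ with $s+t=r<n-1$. Validity of such a $g$ as a Goppa polynomial for $\mathcal L$ or $\mathcal L'$ is automatic: the hypothesis $\alpha\notin\{\tfrac{a+\rho}{c},\tfrac{a+\rho^{-1}}{c}\}\cup O_\infty$ places the two fixed points outside $\mathcal L=O_\alpha$, and the same holds for $\mathcal L'=O_\infty\setminus\{\infty\}$ because distinct $G$-orbits are disjoint.

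For part~(1), assume $r\geq n-1$. In the expurgated case, $\widetilde H$ from~(\ref{widetilde H}) is $(r+1)\times n$; after absorbing the nonzero factor $g(\alpha_i)$ into each column, the first $n$ rows form an $n\times n$ Vandermonde matrix on the distinct points $\alpha_1,\ldots,\alpha_n$, whose determinant $\prod_{i<j}(\alpha_j-\alpha_i)$ is nonzero. Hence $\widetilde H$ has full column rank $n$ over $\mathbb F_{q^2}$, its null space is trivial, and \emph{a fortiori} the binary code $\widetilde\Gamma(\mathcal L,g)$ is zero. In the extended case, consider the $n\times n$ submatrix of $\overline H$ formed by rows $0,1,\ldots,n-2,r$ together with all $n$ columns: it has the block form $\bigl(\begin{smallmatrix} V & 0 \\ v^T & g(\infty)^{-1}\end{smallmatrix}\bigr)$, with $V$ a column-scaled $(n-1)\times(n-1)$ Vandermonde block and determinant $g(\infty)^{-1}\det V\neq 0$, so $\overline H$ has rank $n$ and $\overline\Gamma(\mathcal L',g)=0$.

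The only point requiring a moment's thought is the coefficient-field extension of Lemma~\ref{lem g(x)} to $\mathbb F_{q^2}[x]$; beyond that, the whole argument reduces to an orbit-degree count (for part~(2)) and a Vandermonde full-rank observation (for part~(1)).
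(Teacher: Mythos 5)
Your proposal is correct and follows essentially the same route as the paper: part (2) is exactly the paper's argument (Lemma \ref{lem g(x)} forces $g$ to be a product of orbit polynomials, and the orbit partition of Corollary \ref{cor orbit 1} rules out every orbit of size $n$ once $r<n-1$, leaving only powers of $g_1$ and $g_2$), and part (1) is the paper's full-rank-over-$\Bbb F_{q^2}$ containment argument. The only difference is that you supply the Vandermonde justification for $\operatorname{rank}_{\Bbb F_{q^2}}(\widetilde H)=n$ and flag the $\Bbb F_q$-versus-$\Bbb F_{q^2}$ coefficient issue in Lemma \ref{lem g(x)}, both of which the paper leaves implicit.
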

\begin{proof}
(1) If $r\ge n-1$  and  $\widetilde H$ is defined as (\ref{widetilde H}),
 then $\mbox{rank}_{\Bbb F_{q^2}}(\widetilde H)=n$ and
\begin{eqnarray*}&&\widetilde{\Gamma}(\mathcal L, g)=\left\{\boldsymbol c=(c_1, \ldots, c_n) \in \Bbb F_2^n: \widetilde H \boldsymbol c^\mathrm T=0\right\}\\&& \subseteq
\left\{\boldsymbol c=(c_1, \ldots, c_n) \in \Bbb F_{q^2}^n: \widetilde H \boldsymbol c^\mathrm T=0\right\}=\{\boldsymbol 0\},\end{eqnarray*}
Hence the expurgated Goppa code $\widetilde{\Gamma}(\mathcal L, g)$ is zero.
 Similarly, we can prove that the extended Goppa code $\overline{\Gamma}(\mathcal L', g)$ is also  zero.

(2) If the degree $r$ of  $g(x)$ is less than  $n-1$ and $g(x)$ satisfies the condition in (\ref{eq 2.1}),
  then by  Lemma \ref{lem g(x)}, there is not any root of $g(x)$ in an orbit $O_{\alpha_i}$ with $|O_{\alpha_i}|=n$ in Corollary \ref{cor orbit 1}. Hence $g(x)=g_1(x)^sg_2(x)^t \in \Bbb F_{q^2}[x]$, where $ g_1(x)=x+\frac{a+\rho}c,~ g_2(x)=x+\frac{a+\rho^{-1}}c$, $s,t \in \mathbb{N}$,  and $s+t <n-1$.
\end{proof}

\begin{rem} \label{rem g}
In Proposition \ref{prop g}, if $g(x)=\sum\limits_{i=0}^r g_i x^i \in \Bbb F_q[x]$ is a monic polynomial satisfying the condition in (\ref{eq 2.1}), then
\begin{equation}\label{remark g irr}
g(x)=\begin{cases} g_1(x)^sg_2(x)^t, &\text{if}~ \lambda^2+(a+d)\lambda+1 ~\text{is reducible over}~ \Bbb F_q,\\
(g_1(x)g_2(x))^s, &\text{if}~ \lambda^2+(a+d)\lambda+1 ~\text{is irreducible over} ~\Bbb F_q,\end{cases}
\end{equation}
where $s,t \in \mathbb{N}$, $s+t <n-1$ if $\lambda^2+(a+d)\lambda+1$ is reducible over $\Bbb F_q$, and $2s<n-1$ if $\lambda^2+(a+d)\lambda+1$ is irreducible over $\Bbb F_q$.

Indeed, if $\lambda^2+(a+d)\lambda+1$ is reducible over $\Bbb F_q$, then $g_1(x)=x+\frac{a+\rho}c$, $g_2(x)=x+\frac{a+\rho^{-1}}c \in \Bbb F_q[x]$, and $g(x)=g_1(x)^sg_2(x)^t \in \Bbb F_q[x]$. If $\lambda^2+(a+d)\lambda+1$ is irreducible over $\Bbb F_q$, then $\ord(\rho)=\ord(A)=n\mid (q+1)$, $(\frac {a+\rho}c)^q=\frac {a+\rho^{-1}}c$, $g_1(x)^sg_2(x)^t=g(x)=(g(x))^q=g_2(x)^sg_1(x)^t$ by $g(x)\in \Bbb F_q[x]$, thus $s=t$.
\end{rem}

 In the following, we always assume that the degree $r$ of $g(x)$ is less than $n-1$. We shall find generator polynomials of cyclic expurgated or extended Goppa codes with Goppa polynomial $g(x)= g_1(x)^sg_2(x)^t$, $s,t \in \mathbb{N}$ and $s+t <n-1$. Specifically, we shall consider the Goppa polynomial $g(x)$ in  three cases:

 (1) $g(x)=g_1(x)$ or $g_2(x)$;

 (2) $g(x)=g_1(x)^s$ or $g_2(x)^t$, $s,t \in \mathbb{N}$ and $1\leq s, t<n-1$;

 (3) $g(x)=g_1(x)^sg_2(x)^t$, $s,t \in \mathbb{N}$ and $s+ t<n-1$;\\
  where $g_1(x)$ and $g_2(x)$ are just defined as (\ref{eq g_i}).
\subsection{$g(x)=g_1(x)$ or $g_2(x)$.}\

In the subsection, we shall assume that $g(x)=g_1(x)$ or $g_2(x)$ and find generator polynomials of cyclic expurgated or extended Goppa codes in the case of $\sigma^j=1$ of Section 2.3.

\begin{thm} \label{Th 1}
Let $A=\left(\begin{array}{cc}
a & b \\
c & d
\end{array}\right) \in PGL_2(\Bbb F_q)$ be of order $n>2$ with $|A|=ad+bc=1$ and $c\ne 0$,  $G=\langle A\rangle$ a subgroup of $PGL_2(\Bbb F_q)$, and  $|\lambda E_2-A|=\lambda^2+(a+d)\lambda+1=(\lambda-\rho)(\lambda-\rho^{-1})$  reducible over $\Bbb F_q$.
Let
$$
\mathcal L'=O_{\infty}\backslash \{\infty\}=\{ P(\rho ),\ldots, P(\rho^{n-1} )\}=\{\gamma, A(\gamma), \ldots, A^{n-2}(\gamma)\} \subseteq \Bbb F_q
$$
 and
 $$\mathcal L=O_{\alpha}=\{\alpha_1=\alpha, \alpha_2=A(\alpha),\ldots,\alpha_n=A^{n-1}(\alpha)\} \subseteq \Bbb F_{q}$$
 be the orbit of $\alpha$ under the action of $G$ if $\alpha \notin \{\frac{a+\rho}c,\frac{a+\rho^{-1}}c\}\cup O_{\infty}$.

(1) If $g_1(x)=x+\frac{a+\rho}c \in \Bbb F_q[x]$, then both the expurgated Goppa code $\widetilde{\Gamma}(\mathcal L, g_1)$ and the extended Goppa code $\overline{\Gamma}(\mathcal L', g_1)$ are cyclic of length $n$ and  their generator polynomials are  $u_1(x)=(x+1)m_{\rho^{-1}}(x)$, where $m_{\rho^{-1}}(x)$ is the minimal polynomial of $\rho^{-1}$ over $\Bbb F_2$.

(2) If $g_2(x)=x+\frac{a+\rho^{-1}}c \in \Bbb F_q[x]$, then both the expurgated Goppa code $\widetilde{\Gamma}(\mathcal L, g_2)$ and the extended Goppa code $\overline{\Gamma}(\mathcal L', g_2)$  are cyclic of length $n$ and  their generator polynomials are  $u_2(x)=(x+1)m_\rho(x)$, where $m_\rho(x)$ is the minimal polynomial of $\rho$ over $\Bbb F_2$.
\end{thm}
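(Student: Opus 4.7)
The plan is to identify the defining zeros of the cyclic code inside $\Bbb F_2[x]/(x^n-1)$. The key ingredient I would invoke is the diagonalization $A=P\Lambda P^{-1}$ with $\Lambda=\mathrm{diag}(\rho,\rho^{-1})$ produced in the proof of Lemma~\ref{lem LX2}, which yields the explicit parametrization $\alpha_i=P(\rho^{2(i-1)}\xi)$ of the orbit with $\xi=P^{-1}(\alpha_1)$. A short characteristic-$2$ computation using $(a+\rho)+(a+\rho^{-1})=\rho+\rho^{-1}=a+d$ leads to the clean identity $g_1(P(w))=(a+d)w/(c(w+1))$, and hence
\[
\frac{1}{g_1(\alpha_i)}=\frac{c}{(a+d)\xi}+\frac{c}{(a+d)\xi^{2}}\rho^{-2(i-1)},\qquad \frac{\alpha_i}{g_1(\alpha_i)}=\frac{a+\rho^{-1}}{a+d}+\frac{a+\rho}{(a+d)\xi}\rho^{-2(i-1)}.
\]

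For part~(1) in the expurgated case, set $S_0=\sum_{i=1}^n c_i$ and $S_1=\sum_{i=1}^n c_i\rho^{-2(i-1)}$. Substituting the displayed formulas into the two rows of $\widetilde H$ in (\ref{widetilde H}) turns the parity check into an $\Bbb F_q$-linear $2\times 2$ system in $(S_0,S_1)$ whose determinant simplifies (via $\rho+\rho^{-1}=a+d$) to $c/((a+d)\xi^{2})\neq 0$; hence $\widetilde H\mathbf{c}^{\mathrm T}=0$ if and only if $S_0=0=S_1$. Writing a codeword as $c(x)=\sum_{i=1}^n c_i x^{i-1}\in\Bbb F_2[x]/(x^n-1)$, these conditions become $c(1)=0$ and $c(\rho^{-2})=0$. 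Over $\Bbb F_2$ the second is equivalent to $m_{\rho^{-2}}(x)\mid c(x)$, and since $\rho^{-2}=(\rho^{-1})^2$ is the Frobenius image of $\rho^{-1}$ we have $m_{\rho^{-2}}=m_{\rho^{-1}}$. Because $\rho^{-1}\neq 1$ (as $\ord(\rho^{-1})=n>2$), the polynomials $x+1$ and $m_{\rho^{-1}}(x)$ are coprime divisors of $x^n-1$, so the code is precisely the principal ideal $\langle(x+1)m_{\rho^{-1}}(x)\rangle$.

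For the extended case take $\alpha_i=P(\rho^{2i})$ for $1\le i\le n-1$ and attach $c_n$ to $\infty=P(1)$; the same formulas apply (now with $\xi=\rho^{2}$), and the extra column of $\overline H$ contributes $g_1(\infty)^{-1}=1$ only in its last row. The two parity equations reduce to $\sum_{i=1}^{n-1}c_i+\sum_{i=1}^{n-1}c_i\rho^{-2i}=0$ and $\frac{a+\rho^{-1}}{a+d}\sum_{i=1}^{n-1}c_i+\frac{a+\rho}{a+d}\sum_{i=1}^{n-1}c_i\rho^{-2i}+c_n=0$. Subtracting $\frac{a+\rho^{-1}}{a+d}$ times the first from the second and again using $\rho+\rho^{-1}=a+d$ yields $\sum_{i=1}^{n-1}c_i\rho^{-2i}+c_n=0$; summing this with the first equation recovers the extended-sum condition $\sum_{i=1}^{n}c_i=0$. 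Using $\rho^n=1$ to reindex, $\sum_{i=1}^{n-1}c_i\rho^{-2i}+c_n=\rho^{-2}c(\rho^{-2})$, so the two conditions again collapse to $c(1)=0$ and $c(\rho^{-2})=0$, and the extended code shares the same generator polynomial $u_1(x)=(x+1)m_{\rho^{-1}}(x)$.

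Part~(2) is an entirely parallel argument after interchanging the roles of $\rho$ and $\rho^{-1}$: the analogous clean identity is $g_2(P(w))=(a+d)/(c(w+1))$, and the same sequence of reductions produces $c(1)=0$ together with $c(\rho^{2})=0$, whose minimal polynomial over $\Bbb F_2$ is $m_\rho(x)=m_{\rho^{2}}(x)$. The main technical obstacle I anticipate is the bookkeeping of characteristic-$2$ cancellations together with the cyclic reindexing in the extended case; the identity $\rho+\rho^{-1}=a+d$ is the workhorse that repeatedly collapses what appear to be intricate rational expressions from $\widetilde H$ and $\overline H$ into the rank-$1$ linear forms in $\rho^{\pm 2(i-1)}$ from which the cyclic defining set can be read off directly.
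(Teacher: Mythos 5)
Your argument is correct and is essentially the paper's own proof: both reduce the parity-check equations to the evaluation conditions $c(1)=0$ and $c(\rho^{\mp 2})=0$ by exploiting that $1/g_i(\alpha_j)$ is an affine function of $\rho^{\mp 2j}$ (you via the diagonalization $A=P\Lambda P^{-1}$ and the identity $g_1(P(w))=(a+d)w/(c(w+1))$, the paper via conjugating $A$ by $B=\bigl(\begin{smallmatrix}0&1\\1&(a+\rho)/c\end{smallmatrix}\bigr)$ to the affine map $D(w)=\rho^{-2}w+c\rho^{-1}$ and a telescoping factorization), then invoke $m_{\rho^{-2}}=m_{\rho^{-1}}$ over $\Bbb F_2$. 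The only blemish is a harmless scalar slip: your displayed formula for $1/g_1(\alpha_i)$ (and hence the determinant $c/((a+d)\xi^2)$) is off by an overall factor of $\xi^{-1}$ from the correct $\frac{c}{a+d}+\frac{c}{(a+d)\xi}\rho^{-2(i-1)}$, which does not affect the conclusion since only the vanishing of the row is used.
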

\begin{proof}
 Let $\lambda^2+(a+d)\lambda+1=(\lambda-\rho)(\lambda-\rho^{-1})$ be reducible over $\Bbb F_q$, then by Proposition \ref{prop g},
$$
g_1(x)=x+\frac{a+\rho}c,~~ g_2(x)=x+\frac{a+\rho^{-1}}c \in \Bbb F_q[x]
$$
satisfy the condition in  (\ref{eq 2.1}). Hence by Lemma \ref{lem cyclic}, the expurgated Goppa codes $\widetilde{\Gamma}(\mathcal L, g_1)$, $\widetilde{\Gamma}(\mathcal L, g_2)$ and the extended Goppa codes $\overline{\Gamma}(\mathcal L', g_1)$, $\overline{\Gamma}(\mathcal L', g_2)$ are all cyclic of length $n$.
 In the following, we only need to find the generator polynomial of $\widetilde{\Gamma}(\mathcal L, g_1)$. Similarly, we can also obtain those of $\overline{\Gamma}(\mathcal L', g_1)$, $\widetilde{\Gamma}(\mathcal L, g_2)$ and $\overline{\Gamma}(\mathcal L', g_2)$.

For the cyclic expurgated Goppa code $\widetilde{\Gamma}(\mathcal L, g_1)$, we shall prove that the polynomial $(x+1)m_{\rho^{-1}}(x)$ is its generator polynomial, where $m_{\rho^{-1}}(x)$ is the minimal polynomial of $\rho^{-1}$ over $\Bbb F_2$. Now we consider the parity-check matrix of $\widetilde{\Gamma}(\mathcal L, g_1)$ as follows:
$$
\widetilde H=\left(\begin{array}{cccc}
\frac{1}{\alpha_1+\frac{a+\rho}c} &  \cdots & \frac{1}{\alpha_n+\frac{a+\rho}c} \\
1 &  \cdots & 1
\end{array}\right).
$$
Hence $\boldsymbol c=(c_1,\ldots,c_n) \in \widetilde{\Gamma}(\mathcal L, g_1)$ if and only if $\widetilde H \boldsymbol c^\mathrm T=0$ if and only if $\boldsymbol c$ is a solution of the system of equations
\begin{equation} \label{Th 1,eq 1}
\left\{\begin{array}{ll}
&\frac{c_1}{\alpha_1+\frac{a+\rho}c}+\cdots+\frac{c_n}{\alpha_n+\frac{a+\rho}c}=0, \\
&c_1+c_2+\cdots+c_n=0.
\end{array}\right.
\end{equation}

In the following, we shall prove  that $\boldsymbol c=(c_1,\ldots,c_n) \in \widetilde{\Gamma}(\mathcal L, g_1)$ if and only if $\boldsymbol c$ is a solution of the system of equations
\begin{equation} \label{Th1,eq 2}
\left\{\begin{array}{ll}
&c_1+c_2e+\cdots+c_ne^{n-1}=0, \\
&c_1+c_2+\cdots+c_n=0, \end{array}\right.
\end{equation}
 where $e=\rho^{-2}$. It means that the system in (\ref{Th 1,eq 1}) is equivalent to the system in (\ref{Th1,eq 2}).

In (\ref{Th 1,eq 1}), $\frac 1{\alpha_i+\frac{a+\rho}c}=B(\alpha_i)$, $1\le i\le  n$, where
 $B=\left(\begin{array}{cc}
0 & 1 \\
1 & \frac{a+\rho}c
\end{array}\right)$.  On the other hand,
$$
D=BAB^{-1}=\left(\begin{array}{cc}
a+d+\rho & c \\
\frac{\rho^2+(a+d)\rho+1}c & \rho
\end{array}\right)=\left(\begin{array}{cc}
\rho^{-2} & c\rho^{-1} \\
0 & 1
\end{array}\right)\triangleq \left(\begin{array}{cc}
e & f \\
0 & 1
\end{array}\right),
$$
where $e=\rho^{-2}$ and $f=c\rho^{-1}$. Then $\ord(D)=\ord(A)=\ord(e)=n$ and
$$
D^i=BA^iB^{-1}=\left(\begin{array}{cc}
e^i & f(e^{i-1}+e^{i-2}+\cdots+1) \\
0 & 1
\end{array}\right), 1\leq i \leq n-1.
$$

By $\alpha_{i+1}=A(\alpha_i), 1\leq i \leq n-1$, the system (\ref{Th 1,eq 1}) is  equivalent to the system
$$\left\{\begin{array}{ll}
&c_1B(\alpha_1)+c_2B(A(\alpha_1))+\cdots+c_nB(A^{n-1}(\alpha_1))=0,\\
&c_1+c_2+\cdots+c_n=0.\end{array}\right.
$$
Note that $B(A^i(\alpha_1))=BA^iB^{-1}B(\alpha_1)=D^iB(\alpha_1)$, $1\le i\le n-1$.
Hence the above system is equivalent to the system
$$\left\{\begin{array}{ll}
&c_1B(\alpha_1)+c_2D(B(\alpha_1))+\cdots+c_nD^{n-1}(B(\alpha_1))=0, \\
&c_1+c_2+\cdots+c_n=0.\end{array}\right.
$$
Now we consider the first equation in the above system.
$$
\begin{aligned}
&c_1B(\alpha_1)+c_2D(B(\alpha_1))+\cdots+c_nD^{n-1}(B(\alpha_1))\\
&=(c_1+c_2e+c_3e^2+\cdots+c_ne^{n-1})B(\alpha_1)\\ &+ \frac f{e-1}(c_2(e-1)+c_3(e^2-1)+\cdots+c_n(e^{n-1}-1))\\
&= (c_1+c_2e+c_3e^2+\cdots+c_ne^{n-1})(B(\alpha_1)+\frac f{e-1})=0.
\end{aligned}
$$

  Suppose that $B(\alpha_1)+ \frac f{e-1}=0$, i.e., $\frac 1{\alpha_1+\frac{a+\rho}c} = \frac{c\rho^{-1}}{\rho^{-2}-1}$.  Then   $\alpha_1=\frac{a+\rho^{-1}}c$, which is contradictory with the  choice of $\alpha_1$. Therefore,  the  system in (\ref{Th 1,eq 1}) is equivalent to the system in (\ref{Th1,eq 2}).

  Moreover, let $m_{\rho^{-1}}(x)$ be the minimal polynomial of $\rho^{-1}$ over $\Bbb F_2$, then it is also the minimal polynomial of  $e=\rho^{-2}$ over $\Bbb F_2$. Since $\widetilde{\Gamma}(\mathcal L, g_1)$ is a cyclic code of length $n$, whose elements satisfy the system (\ref{Th1,eq 2}), its generator polynomial is $u_1(x)=(x-1)m_{\rho^{-1}}(x)$.

  Similarly, we can get generator polynomial of $\overline{\Gamma}(\mathcal L', g_1)$ is $u_1(x)=(x-1)m_{\rho^{-1}}(x)$ and generator polynomials of both $\widetilde{\Gamma}(\mathcal L, g_2)$ and $\overline{\Gamma}(\mathcal L', g_2)$ are $u_2(x)=(x-1)m_{\rho}(x)$, where $m_{\rho}(x)$ is the minimal polynomial of $\rho$ over $\Bbb F_2$.
\end{proof}
\begin{rem} \label{WY}
In  \cite{WY}, $m_\rho(x)=m_{\rho^{-1}}(x)$  is equivalent to the following conditions:

(1) The polynomial $m_\rho(x)$ is self-reciprocal.

(2) There exists a positive integer $\omega$ such that $2^\omega \equiv -1 \pmod n.$

(3) Suppose $n=p_1^{l_1}p_2^{l_2}\cdots p_t^{l_t}$ is the prime factorization, where $p_1,\ldots,p_t$ are distinct odd primes and $l_1,\ldots,l_t$ are positive integers. By this time, $v_2(d_1)=\cdots=v_2(d_t)=\delta$, where $d_i$ is the order of 2 modulo $p_i$, $v_2(d_i)$ denotes the $2$-adic valuation of $d_i$, $1\leq i\leq t$, and $\delta$ is a positive integer.
\end{rem}

\begin{rem}\label{conclusion 1} In Theorem \ref{Th 1},
if   $A=\left(\begin{array}{cc}
a & b \\
c & d
\end{array}\right)\in PGL_2(\Bbb F_q)$ and $\lambda^2+(a+d)\lambda+1=(\lambda-\rho)(\lambda-\rho^{-1})$ is reducible over $\Bbb F_q$, then the polynomial $g(x)$ of degree $1$ that satisfies the condition in (\ref{eq 2.1}) can only be $g_1(x)=x+\frac{a+\rho}c$ or $g_2(x)=x+\frac{a+\rho^{-1}}c$.
Hence  all possible cyclic expurgated Goppa codes with Goppa polynomial of degree $1$ are $\widetilde{\Gamma}(\mathcal L, g_1)$ and $\widetilde{\Gamma}(\mathcal L, g_2)$; all possible cyclic extended Goppa codes with Goppa polynomial of degree $1$ are $\overline{\Gamma}(\mathcal L', g_1)$ and $\overline{\Gamma}(\mathcal L', g_2)$.
\end{rem}
\begin{cor}\label{Th1,cor,irr}
Let $A\in PGL_2(\Bbb F_q)$ be the matrix, $G=\langle A\rangle$ the subgroup  in Theorem \ref{Th 1}, and  $|\lambda E_2-A|=\lambda^2+(a+d)\lambda+1=(\lambda-\rho)(\lambda-\rho^{-1})$ irreducible over $\Bbb F_q$. Let
$$
\mathcal L'=O_{\infty}\backslash \{\infty\}=\{ P(\rho ),\ldots, P(\rho^{n-1} )\}=\{\gamma, A(\gamma), \ldots, A^{n-2}(\gamma)\} \subseteq \Bbb F_{q^2}
$$
and
$$\mathcal L=O_{\alpha}=\{\alpha_1=\alpha, \alpha_2=A(\alpha),\ldots,\alpha_n=A^{n-1}(\alpha)\} \subseteq \Bbb F_{q^2}$$
 be the orbit of $\alpha$ under the action of $G$ if $\alpha \notin \{\frac{a+\rho}c,\frac{a+\rho^{-1}}c\}\cup O_{\infty}$.

 Then the conclusions of Theorem \ref{Th 1} still hold but with the Goppa polynomials $g_1(x), g_2(x)\in \Bbb F_{q^2}[x]$. Moreover, let $\lambda^2+(a+d)\lambda+1$ be irreducible over $\Bbb F_q$, $\ord(A)=n\mid (q+1)$, then by Remark \ref{WY}, $m_\rho(x)=m_{\rho^{-1}}(x)$.
\end{cor}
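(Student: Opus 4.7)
The plan is to show that the proof of Theorem \ref{Th 1} adapts essentially verbatim, only replacing $\Bbb F_q$ by $\Bbb F_{q^2}$ wherever the eigenvalues $\rho,\rho^{-1}$ or the orbit $O_\alpha$ appear, and then to deduce the coincidence of the minimal polynomials from a short elementary observation.

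First I would note that in the irreducible case the roots $\rho,\rho^{-1}\in\Bbb F_{q^2}^*\setminus\Bbb F_q$, so the elements $\frac{a+\rho}c,\frac{a+\rho^{-1}}c$ and the orbit $O_\alpha$ live in $\Bbb F_{q^2}$. The relevant partition is the one in Corollary \ref{cor orbit 1} rather than Corollary \ref{cor orbit 2}, which justifies the given form of $\mathcal L$ and $\mathcal L'$ and in particular the choice $\alpha\notin\{\tfrac{a+\rho}c,\tfrac{a+\rho^{-1}}c\}\cup O_\infty$. By Proposition \ref{prop g}, each of $g_1(x),g_2(x)\in\Bbb F_{q^2}[x]$ still satisfies condition (\ref{eq 2.1}), and hence by Lemma \ref{lem cyclic} the codes $\widetilde{\Gamma}(\mathcal L,g_i)$ and $\overline{\Gamma}(\mathcal L',g_i)$ are cyclic of length $n$ for $i=1,2$.

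Next I would run the algebraic computation in the proof of Theorem \ref{Th 1} without modification. The conjugation $D=BAB^{-1}=\bigl(\begin{smallmatrix} e & f\\ 0 & 1\end{smallmatrix}\bigr)$ with $e=\rho^{-2}$, $f=c\rho^{-1}$ and the resulting reduction of the parity-check equation to the system (\ref{Th1,eq 2}) only require that $\rho\in\Bbb F_{q^2}^*$ has order $n=\ord(A)$; nothing in that manipulation demanded $\rho\in\Bbb F_q$. The exclusion $\alpha_1\ne\tfrac{a+\rho^{-1}}c$ again guarantees $B(\alpha_1)+\tfrac{f}{e-1}\ne 0$. Since the codewords of $\widetilde{\Gamma}(\mathcal L,g_1)$ are binary, the code is the binary cyclic code annihilated by $1$ and $e=\rho^{-2}$, hence has generator polynomial $(x+1)m_{\rho^{-1}}(x)$ over $\Bbb F_2$ (using that $\rho^{-1}$ and $\rho^{-2}$ share their minimal polynomial over $\Bbb F_2$, since squaring is the $\Bbb F_2$-Frobenius). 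The same argument with $g_2$ and with $\overline{\Gamma}(\mathcal L',g_i)$ yields the stated generator polynomials.

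Finally, for the equality $m_\rho(x)=m_{\rho^{-1}}(x)$, I would appeal to Remark \ref{WY}(2): it suffices to exhibit a positive integer $\omega$ with $2^\omega\equiv -1\pmod n$. In the irreducible case Lemma \ref{lem LX1} gives $n\mid q+1=2^l+1$, so $2^l\equiv -1\pmod n$, and one may take $\omega=l$. This closes the argument. The only potentially delicate point is checking that replacing $\Bbb F_q$ by $\Bbb F_{q^2}$ in the parity-check reduction does not alter the binary cyclic structure — but this is immediate because the defining constraint $\widetilde H\boldsymbol c^{\mathrm T}=0$ is linear over $\Bbb F_{q^2}\supseteq\Bbb F_2$, and the annihilator of the solution set in $\Bbb F_2[x]/(x^n-1)$ is determined by the $\Bbb F_2$-minimal polynomials of the exponents appearing in (\ref{Th1,eq 2}).
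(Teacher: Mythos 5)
Your proposal is correct and follows exactly the route the paper intends: the paper states this corollary without a separate proof, treating it as the observation that the argument of Theorem \ref{Th 1} goes through verbatim over $\Bbb F_{q^2}$ together with the appeal to Remark \ref{WY} via $n\mid q+1=2^l+1$ (so $2^l\equiv -1\pmod n$). Your write-up simply makes explicit the details the authors leave implicit, and all the steps check out.
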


By Lemma \ref{lem Goppa}, we can get the following result.
\begin{cor}\label{Th1,cor,square}
In Theorem \ref{Th 1}, $\widetilde {\Gamma}(\mathcal L, g_i^2)=\widetilde {\Gamma}(\mathcal L, g_i)=\langle h_i(x)\rangle=\overline {\Gamma}(\mathcal L', g_i^2)=\overline {\Gamma}(\mathcal L', g_i)$, $i=1,2$,  are all  cyclic codes of length $n$.
\end{cor}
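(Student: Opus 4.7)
The plan is to chain together Lemma \ref{lem Goppa} with Theorem \ref{Th 1}. The crux of the argument is the simple observation that the polynomials $g_1(x)=x+\frac{a+\rho}{c}$ and $g_2(x)=x+\frac{a+\rho^{-1}}{c}$ appearing in Theorem \ref{Th 1} are linear, hence have a single simple root and in particular no repeated zeros. This is precisely the hypothesis required to invoke Lemma \ref{lem Goppa} for binary Goppa codes, which yields $\Gamma(\mathcal L, g_i)=\Gamma(\mathcal L, g_i^2)$ for $i=1,2$.

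Next I would propagate this equality to the expurgated and extended versions. Since the expurgated code $\widetilde{\Gamma}(\mathcal L,g)$ is defined as the subcode of $\Gamma(\mathcal L,g)$ consisting of codewords with zero coordinate sum, the identity $\Gamma(\mathcal L,g_i)=\Gamma(\mathcal L,g_i^2)$ immediately implies $\widetilde{\Gamma}(\mathcal L,g_i)=\widetilde{\Gamma}(\mathcal L,g_i^2)$. For the extended code, the analogous definition $\overline{\Gamma}(\mathcal L',g)=\{(c_1,\ldots,c_n,c_{n+1}):(c_1,\ldots,c_n)\in\Gamma(\mathcal L',g),\ \sum c_i=0\}$ depends on the underlying Goppa code only through $\Gamma(\mathcal L',g)$, so likewise $\overline{\Gamma}(\mathcal L',g_i)=\overline{\Gamma}(\mathcal L',g_i^2)$.

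Finally, Theorem \ref{Th 1} already asserts that $\widetilde{\Gamma}(\mathcal L,g_i)$ and $\overline{\Gamma}(\mathcal L',g_i)$ are cyclic codes of length $n$ sharing the same generator polynomial $u_i(x)=(x+1)m_{\rho^{\mp 1}}(x)$. Two cyclic codes of the same length with the same generator polynomial coincide, so $\widetilde{\Gamma}(\mathcal L,g_i)=\overline{\Gamma}(\mathcal L',g_i)=\langle h_i(x)\rangle$ with $h_i(x)=u_i(x)$. Combining these equalities with those obtained from Lemma \ref{lem Goppa} yields the full chain $\widetilde{\Gamma}(\mathcal L,g_i^2)=\widetilde{\Gamma}(\mathcal L,g_i)=\langle h_i(x)\rangle=\overline{\Gamma}(\mathcal L',g_i^2)=\overline{\Gamma}(\mathcal L',g_i)$.

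There is essentially no obstacle here: the corollary is a direct synthesis of the square-invariance property of binary Goppa codes (Lemma \ref{lem Goppa}), applied to linear Goppa polynomials, and the generator-polynomial computation already performed in Theorem \ref{Th 1}. The only point that deserves a brief verbal comment is why expurgation and extension preserve the equality $\Gamma(\mathcal L,g_i)=\Gamma(\mathcal L,g_i^2)$, which is immediate from their coordinate-wise definitions.
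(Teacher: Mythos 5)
Your proposal is correct and follows essentially the same route as the paper, which simply invokes Lemma \ref{lem Goppa} (square-invariance of binary Goppa codes for squarefree $g$) together with the generator polynomials already computed in Theorem \ref{Th 1}; you merely spell out the details the paper leaves implicit, namely that $g_1,g_2$ are linear and hence have no multiple zeros, that expurgation and extension preserve the equality $\Gamma(\mathcal L,g_i)=\Gamma(\mathcal L,g_i^2)$, and that two cyclic codes of length $n$ with the same generator polynomial coincide. Your identification $h_i(x)=u_i(x)=(x+1)m_{\rho^{\mp1}}(x)$ is also the intended reading of the corollary's notation.
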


\begin{exa} \label{eg 1} 
Let $\Bbb F_{2^6}^*=\langle \alpha \rangle$ and $A=\left(\begin{array}{cc}
\alpha^5 & \alpha^{43} \\
\alpha^{13} & \alpha^{59}
\end{array}\right) \in PGL_2(\Bbb F_{2^6})$, then $|A|=1$, $\ord(A)=21>2$, and $\lambda^2+(\alpha^5+\alpha^{59})\lambda+1=(\lambda-\rho)(\lambda-\rho^{-1}) \in \Bbb F_{2^6}[x]$, where $\rho=\alpha^3, \rho^{-1}=\alpha^{-3}=\alpha^{60} \in \Bbb F_{2^6}^*$ are two eigenvalues of $A$. The minimal polynomials of $\rho^{-1}$ and $\rho$ over $\Bbb F_2$ are $m_{\rho^{-1}}(x)=x^6+x^4+x^2+x+1$ and $m_\rho(x)=x^6+x^5+x^4+x^2+1$, respectively. Let
$$
\begin{aligned}
\mathcal L'=O_{\infty}\backslash \{\infty\}=
&\{\alpha^{55},\alpha^{59},\alpha^{23},\alpha^{15},\alpha^{7},\alpha^{34},\alpha^{38},
0,\alpha^{47},\alpha^{18},\alpha^{17},
\alpha^{53},\alpha^{10},\alpha^{42},\\
&\alpha^{51},\alpha^{20},\alpha^{40},\alpha^{13},\alpha^{12},\alpha^{46}\} \subseteq \Bbb F_{2^6}
\end{aligned}
$$
and
$$
\begin{aligned}
\mathcal L=O_{\alpha}=
&\{\alpha,\alpha^{62},\alpha^{11},\alpha^{60},\alpha^{14},\alpha^{5},\alpha^{6},
\alpha^{32},\alpha^{2},\alpha^{8},
\alpha^{27},\alpha^{30},\alpha^{48},\\ &\alpha^{58},\alpha^{52},\alpha^{37},\alpha^{21},\alpha^{36},\alpha^{44},\alpha^{26},
\alpha^{50}\} \subseteq \Bbb F_{2^6},
\end{aligned}
$$
where $|\mathcal L'|=20$ and $|\mathcal L|=21$ = $\ord (A)$.

(1) If $g_1(x)=x+\frac{\alpha^5+\alpha^3}{\alpha^{13}}$ and $g_1(x)^2=(x+\frac{\alpha^5+\alpha^3}{\alpha^{13}})^2$, we can obtain that the expurgated Goppa codes $\widetilde{\Gamma}(\mathcal L, g_1)$ and $\widetilde{\Gamma}(\mathcal L, g_1^2)$, the extended Goppa codes $\overline{\Gamma}(\mathcal L', g_1)$ and $\overline{\Gamma}(\mathcal L', g_1^2)$ are all $[21,14,4]$ cyclic codes and their generator polynomials are $(x+1)(x^6+x^4+x^2+x+1)=(x+1)m_{\rho^{-1}}(x)$ by the Magma program.

(2) If $g_2(x)=x+\frac{\alpha^5+\alpha^{-3}}{\alpha^{13}}$ and $g_2(x)^2=(x+\frac{\alpha^5+\alpha^{-3}}{\alpha^{13}})^2$, we can obtain that the expurgated Goppa codes $\widetilde{\Gamma}(\mathcal L, g_2)$ and $\widetilde{\Gamma}(\mathcal L, g_2^2)$, the extended Goppa codes $\overline{\Gamma}(\mathcal L', g_2)$ and $\overline{\Gamma}(\mathcal L', g_2^2)$ are all $[21,14,4]$ cyclic codes and their generator polynomials are $(x+1)(x^6+x^5+x^4+x^2+1)=(x+1)m_\rho(x)$ by the Magma program.
\end{exa}

\begin{exa} 
Let $\Bbb F_{2^6}^*=\langle \alpha \rangle$ and $A=\left(\begin{array}{cc}
\alpha^7 & 0 \\
1 & \alpha^{-7}
\end{array}\right) \in PGL_2(\Bbb F_{2^6})$, then $|A|=1$, $\ord(A)=9>2$, and $\lambda^2+(\alpha^7+\alpha^{-7})\lambda+1=(\lambda-\rho)(\lambda-\rho^{-1}) \in \Bbb F_{2^6}[x]$, where $\rho=\alpha^7, \rho^{-1}=\alpha^{-7} \in \Bbb F_{2^6}^*$ are two eigenvalues of $A$. Both the minimal polynomial of $\rho$ and $\rho^{-1}$ over $\Bbb F_2$ are $m_\rho(x)=m_{\rho^{-1}}(x)=x^6+x^3+1$. Let
$$
\begin{aligned}
\mathcal L'=O_{\infty}\backslash \{\infty\}
=\{\alpha^7,\alpha^5,\alpha^{30},\alpha,\alpha^8,\alpha^{51},\alpha^{40},\alpha^{56}\} \subseteq \Bbb F_{2^6}
\end{aligned}
$$
and
$$
\begin{aligned}
\mathcal L
=O_{\alpha^2}=\{\alpha^2,\alpha^{52},\alpha^{35},\alpha^{28},\alpha^{38},\alpha^{16},
\alpha^{19},\alpha^{27},\alpha^{26}\}
\subseteq \Bbb F_{2^6},
\end{aligned}
$$
where $|\mathcal L'|=8$ and $|\mathcal L|=9$ = $\ord (A)$.

(1) If $g_1(x)=x$ and $g_1^2(x)=x^2$, we can obtain that the expurgated Goppa codes $\widetilde{\Gamma}(\mathcal L, g_1)$ and $\widetilde{\Gamma}(\mathcal L, g_1^2)$, the extended Goppa codes $\overline{\Gamma}(\mathcal L', g_1)$ and $\overline{\Gamma}(\mathcal L', g_1^2)$ are all $[9,2,6]$ cyclic codes and their generator polynomials are $(x+1)(x^6+x^3+1)=(x+1)m_{\rho^{-1}}(x)$ by the Magma program.

(2) If $g_2(x)=x+\alpha^7+\alpha^{-7}$ and $g_2^2(x)=(x+\alpha^7+\alpha^{-7})^2$, we can obtain that the expurgated Goppa codes $\widetilde{\Gamma}(\mathcal L, g_2)$ and $\widetilde{\Gamma}(\mathcal L, g_2^2)$, the extended Goppa codes $\overline{\Gamma}(\mathcal L', g_2)$ and $\overline{\Gamma}(\mathcal L', g_2^2)$ are all $[9,2,6]$ cyclic codes and their generator polynomials are  $(x+1)(x^6+x^3+1)=(x+1)m_\rho(x)$ by the Magma program.
\end{exa}
\begin{exa}
Let $\Bbb F_{2^4}^*=\langle \alpha \rangle$ and $\Bbb F_{2^8}^*=\langle \gamma \rangle$, where $\alpha=\gamma^{17}$. Let
$A=\left(\begin{array}{cc}
\alpha^{11} & \alpha^5 \\
\alpha^3 & \alpha^6
\end{array}\right) \in PGL_2(\Bbb F_{2^4})$, then $|A|=1$, $\ord(A)=17>2$, and $\lambda^2+(\alpha^{11}+\alpha^6)\lambda+1=(\lambda-\rho)(\lambda-\rho^{-1}) \in \Bbb F_{2^8}[x]$, where $\rho=\gamma^{45}, \rho^{-1}=\gamma^{-45}=\gamma^{210} \in \Bbb F_{2^8}^*$ are two eigenvalues of $A$. Both the minimal polynomial of $\rho$ and $\rho^{-1}$ over $\Bbb F_2$ are $m_\rho(x)=m_{\rho^{-1}}(x)=x^8+x^5+x^4+x^3+1$. Let
$$
\begin{aligned}
\mathcal L'=O_{\infty}\backslash \{\infty\}
=\{\gamma^{136},\gamma^{119},\gamma^{68},1,\gamma^{17},\gamma^{34},\gamma^{221},\gamma^{170}, \gamma^{153},\\ 0, \gamma^{238},
\gamma^{204},\gamma^{102},\gamma^{187},\gamma^{85},\gamma^{51}\} \subseteq \Bbb F_{2^8}
\end{aligned}
$$
and
$$
\begin{aligned}
\mathcal L
=O_{\gamma^3}=\{\gamma^3,\gamma^{26},\gamma^{147},\gamma^{172},\gamma^{32},
\gamma^{87},\gamma^{232},\gamma^{128}, \gamma^{241},\\ \gamma^{61}, \gamma^{144},\gamma^{191},\gamma^{39},\gamma^{175},\gamma^{38},\gamma^{25},\gamma^{78}\} \subseteq \Bbb F_{2^8}
\end{aligned}
$$
where $|\mathcal L'|=16$ and $|\mathcal L|=17=\ord (A)$.

(1) If $g_1(x)=x+\frac{\alpha^{11}+\gamma^{45}}{\alpha^3}$ and $g_1(x)^2=(x+\frac{\alpha^{11}+\gamma^{45}}{\alpha^3})^2$, we can obtain that the expurgated Goppa codes $\widetilde{\Gamma}(\mathcal L, g_1)$ and $\widetilde{\Gamma}(\mathcal L, g_1^2)$, the extended Goppa codes $\overline{\Gamma}(\mathcal L', g_1)$ and $\overline{\Gamma}(\mathcal L', g_1^2)$ are all $[17,8,6]$ cyclic codes and their generator polynomials are $(x+1)(x^8+x^5+x^4+x^3+1)=(x+1)m_{\rho^{-1}}(x)$ by the Magma program.

(2) If $g_2(x)=x+\frac{\alpha^{11}+\gamma^{210}}{\alpha^3}$ and $g_2(x)^2=(x+\frac{\alpha^{11}+\gamma^{210}}{\alpha^3})^2$, we can obtain that the expurgated Goppa codes $\widetilde{\Gamma}(\mathcal L, g_2)$ and $\widetilde{\Gamma}(\mathcal L, g_2^2)$, the extended Goppa codes $\overline{\Gamma}(\mathcal L', g_2)$ and $\overline{\Gamma}(\mathcal L', g_2^2)$ are all $[17,8,6]$ cyclic codes and their generator polynomials are $(x+1)(x^8+x^5+x^4+x^3+1)=(x+1)m_\rho(x)$ by the Magma program.
\end{exa}

\subsection {$g(x)=g_1(x)^s$ \text{or} $g_2(x)^t$, $s,t \in \mathbb{N}$ \text{and} $s, t<n-1$}\

In the subsection, we shall assume that $g(x)=g_1(x)^s$ or $g_2(x)^t$, where $s,t \in \mathbb{N}$ and $s, t<n-1$.
 We shall find generator polynomials of cyclic expurgated or extended Goppa codes in the case of $\sigma^j=1$ of Section 2.3.
There is a surprising result that the expurgated or extended Goppa codes  are all BCH codes, which are an important class of cyclic codes.
\begin{defn}\label{BCH} (\cite[Definition 9.44]{LN})
Let $a$ be a nonnegative integer and  $\alpha\in \Bbb F_{q^m}$ be a primitive $n$-th root of unity. A BCH code $\mathcal C=\langle u(x)\rangle$ over $\Bbb F_q$ of length $n$ and designed distance $\delta$, $2\le \delta \le n$, is a cyclic code with the generator polynomial:
$$u(x)=\mbox{lcm}(m^{(a)}(x),m^{(a+1)}(x),\ldots, m^{(a+\delta-2)}(x)),$$
where each $m^{(i)}(x)$ is  the minimal polynomial of $\alpha^i$ over $\Bbb F_q$. Then the minimum distance of the BCH code $\mathcal C$ is at least $\delta$.
\end{defn}

\begin{thm} \label{Th 2}
Let $A=\left(\begin{array}{cc}
a & b \\
c & d
\end{array}\right) \in PGL_2(\Bbb F_q)$ be of order $n>2$ with $|A|=ad+bc=1$ and $c\ne 0$,  $G=\langle A\rangle$ a subgroup of $PGL_2(\Bbb F_q)$, and $|\lambda E_2-A|=\lambda^2+(a+d)\lambda+1=(\lambda-\rho)(\lambda-\rho^{-1})$  reducible over $\Bbb F_q$.
Let
$$
\mathcal L'=O_{\infty}\backslash \{\infty\}=\{ P(\rho),\ldots, P(\rho^{n-1} )\}=\{\gamma, A(\gamma), \ldots, A^{n-2}(\gamma)\} \subseteq \Bbb F_q
$$
and
$$\mathcal L=O_{\alpha}=\{\alpha_1=\alpha, \alpha_2=A(\alpha),\ldots,\alpha_n=A^{n-1}(\alpha)\} \subseteq \Bbb F_{q}$$
be the orbit of $\alpha$ under the action of $G$ if $\alpha \notin \{\frac{a+\rho}c,\frac{a+\rho^{-1}}c\}\cup O_{\infty}$.

(1) If $g_1(x)^s=(x+\frac{a+\rho}c)^s \in \Bbb F_q[x]$, then both the expurgated Goppa code $\widetilde{\Gamma}(\mathcal L, g_1^s)$ and the extended Goppa code $\overline{\Gamma}(\mathcal L', g_1^s)$ are  BCH codes of length $n$ with designed distances $\delta=s+2$ and  the generator polynomials:
\begin{equation} \label{eq u_1}
u_1(x)=(x+1)\mbox{lcm}\{m_{\rho^{-i}}(x): i=1,2,\ldots, s\},
\end{equation}
where each $m_{\rho^{-i}}(x)$ is  the minimal polynomial of $\rho^{-i}$ over $\Bbb F_2$.

(2) If $g_2(x)^t=(x+\frac{a+\rho^{-1}}c)^t \in \Bbb F_q[x]$, then both the expurgated Goppa code $\widetilde{\Gamma}(\mathcal L, g_2^t)$ and the extended Goppa code $\overline{\Gamma}(\mathcal L', g_2^t)$  are BCH codes of length $n$ with designed distances $\delta=t+2$ and  the  generator polynomials:
\begin{equation} \label{eq u_2}
u_2(x)=(x+1)\mbox{lcm}\{m_{\rho^i}(x): i=1,2,\ldots, t\},
\end{equation}
where each $m_{\rho^i}(x)$ is  the minimal polynomial of $\rho^i$ over $\Bbb F_2$.

Moreover, if $\deg u_1(x)=n$, then both $\widetilde{\Gamma}(\mathcal L, g_1^s)$ and $\overline{\Gamma}(\mathcal L', g_1^s)$ are zero; similarly, both $\widetilde{\Gamma}(\mathcal L, g_1^t)$ and $\overline{\Gamma}(\mathcal L', g_1^t)$ are also zero if $\deg u_2(x)=n$.
\end{thm}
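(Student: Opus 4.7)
The plan is to reduce the parity check $\widetilde H\boldsymbol c^{\mathrm T}=0$ to the defining system of a BCH code of designed distance $s+2$. As in the proof of Theorem \ref{Th 1}, introduce $B=\left(\begin{smallmatrix}0&1\\1&(a+\rho)/c\end{smallmatrix}\right)$ and $\beta_i := B(\alpha_i) = 1/g_1(\alpha_i)$. Since $\alpha_i+(a+\rho)/c = g_1(\alpha_i)$, rewriting each monomial $\alpha_i^k$ in the basis $\{g_1(\alpha_i)^j\}_{j=0}^{k}$ (i.e.\ performing elementary row operations on $\widetilde H$) transforms the $(s+1)\times n$ matrix with rows $(\alpha_i^k/g_1(\alpha_i)^s)_{i=1}^n$, $k=0,1,\ldots,s$, into the equivalent matrix with rows $(\beta_i^j)_{i=1}^n$, $j=0,1,\ldots,s$. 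Consequently $\boldsymbol c\in\widetilde\Gamma(\mathcal L,g_1^s)$ if and only if $\sum_{i=1}^n c_i \beta_i^j = 0$ for $j=0,1,\ldots,s$.

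Next, reuse the conjugation $D = BAB^{-1} = \left(\begin{smallmatrix}e&f\\0&1\end{smallmatrix}\right)$ with $e = \rho^{-2}$ and $f = c\rho^{-1}$ already established in Theorem \ref{Th 1}. The recursion $\beta_{i+1} = e\beta_i + f$ yields $\beta_i = e^{i-1}\mu + \nu$, where $\nu = f/(e-1)$ and $\mu = \beta_1 + \nu$; the nonvanishing of $\mu$ follows from the hypothesis $\alpha_1 \notin \{(a+\rho)/c, (a+\rho^{-1})/c\}$ by the same calculation as in Theorem \ref{Th 1}. Expanding $\beta_i^j = (e^{i-1}\mu + \nu)^j$ via the binomial theorem and letting $S_k := \sum_{i=1}^n c_i e^{(i-1)k}$, one finds $\sum_i c_i \beta_i^j = \sum_{k=0}^{j} \binom{j}{k} \mu^k \nu^{j-k} S_k$. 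Thus the system $\sum_i c_i \beta_i^j = 0$ for $j = 0, 1, \ldots, s$ is lower triangular in $(S_0, \ldots, S_s)$ with invertible diagonal entries $\mu^j$, hence equivalent to $S_0 = S_1 = \cdots = S_s = 0$. Because $\ord(e) = \ord(\rho) = n$, these are precisely the BCH parity checks at the $s+1$ consecutive $n$-th roots of unity $1, e, e^2, \ldots, e^s$; by Definition \ref{BCH}, the code is BCH with designed distance $s+2$ and generator polynomial $(x+1)\mbox{lcm}\{m_{e^j}(x): 1 \le j \le s\}$. Over $\Bbb F_2$, squaring fixes $2$-cyclotomic cosets, so $m_{e^j}(x) = m_{\rho^{-2j}}(x) = m_{\rho^{-j}}(x)$, yielding $u_1(x)$ as claimed.

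For the extended code $\overline\Gamma(\mathcal L', g_1^s)$, index $\mathcal L' \cup \{\infty\}$ so that $\alpha_n = \infty$; then $\beta_n = B(\infty) = 0$, and a direct check shows the formula $\beta_i = e^{i-1}\mu + \nu$ extends to all $i = 1, \ldots, n$. The Goppa relations on $\mathcal L'$ together with the extension condition $\sum_i c_i = 0$ again collapse to $\sum_{i=1}^n c_i \beta_i^j = 0$ for $j = 0, 1, \ldots, s$ (with $\beta_n = 0$ absorbing the boundary coordinate into the $j \ge 1$ sums), and the triangulation of the previous paragraph gives the same generator polynomial. Part (2), concerning $g_2(x)^t$, is obtained verbatim by interchanging $\rho \leftrightarrow \rho^{-1}$: the conjugation now gives $e = \rho^2$ and zeros $1, \rho^2, \ldots, \rho^{2t}$, whose $\Bbb F_2$-minimal polynomials are $m_{\rho^j}(x)$. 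Finally, if $\deg u_1(x) = n$, then $u_1(x) = x^n - 1$ (the unique monic divisor of $x^n - 1$ of degree $n$), so $\widetilde\Gamma(\mathcal L, g_1^s) = \overline\Gamma(\mathcal L', g_1^s) = \{0\}$, and similarly for $u_2(x)$. The main obstacle in the argument is the binomial bookkeeping in the triangulation step together with verifying the nonvanishing of the diagonal entries $\mu^j$; both amount to the degree-one computation already carried out in the proof of Theorem \ref{Th 1}.
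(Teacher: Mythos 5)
Your proof is correct and reaches the same reduction as the paper---conjugating $A$ by $B$ to the affine matrix $D=\left(\begin{smallmatrix}e&f\\0&1\end{smallmatrix}\right)$ with $e=\rho^{-2}$, converting $\widetilde H$ by a change of polynomial basis into the matrix with rows $(\beta_i^{\,j})_{j=0}^{s}$, and recognizing the resulting parity checks as BCH conditions at the consecutive roots $1,e,\ldots,e^{s}$---but the key technical step is carried out differently. The paper proves the equivalence of the Goppa system and the BCH system by induction on $s$: it expands $(D^{i-1}(\beta))^{s}=(D^{i-1}(\beta))^{s-1}\bigl(e^{i-1}\beta+f\tfrac{e^{i-1}-1}{e-1}\bigr)$ and then uses the characteristic-two factorization $(e^{m}-1)^{i}=\prod_{j}\bigl((e^{m})^{2^{d_j}}+1\bigr)$ attached to the $2$-adic expansion of $i$ to rewrite the cross terms as lower-index BCH sums, which the induction hypothesis eliminates. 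You instead solve the affine recursion $\beta_{i+1}=e\beta_i+f$ in closed form, $\beta_i=e^{i-1}\mu+\nu$ with $\nu=f/(e-1)$ the fixed point and $\mu=\beta_1+\nu\neq 0$, and observe that the binomial expansion makes the passage from $(S_0,\ldots,S_s)$ to $\bigl(\sum_i c_i\beta_i^{\,j}\bigr)_{j=0}^{s}$ a triangular linear map with diagonal entries $\mu^{j}$; this removes the induction and the $2$-adic bookkeeping entirely and isolates the nondegeneracy hypothesis $\alpha_1\neq\frac{a+\rho^{-1}}{c}$ in a single invertibility statement. Your treatment of the extended code (tracking the extra column of $\overline H$ through the change of basis and absorbing it via $\beta_n=B(\infty)=0$) is also more explicit than the paper's, which settles that case with ``similarly.'' Both arguments are sound; yours is shorter and structurally cleaner, while the paper's stays closer to an elementary equation-by-equation verification.
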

\begin{proof}
 Let $\lambda^2+(a+d)\lambda+1=(\lambda-\rho)(\lambda-\rho^{-1})$ be reducible over $\Bbb F_q$, then by Proposition \ref{prop g},
$$g_1(x)^s=(x+\frac{a+\rho}c)^s,~~~ g_2(x)^t=(x+\frac{a+\rho^{-1}}c)^t \in \Bbb F_q[x]$$
satisfy the condition in  (\ref{eq 2.1}). Hence by Lemma \ref{lem cyclic}, the expurgated Goppa codes $\widetilde{\Gamma}(\mathcal L, g_1^s)$, $\widetilde{\Gamma}(\mathcal L, g_2^t)$ and the extended Goppa codes $\overline{\Gamma}(\mathcal L', g_1^s)$, $\overline{\Gamma}(\mathcal L', g_2^t)$ are all cyclic of length $n$.
 In the following, we only need to find the generator polynomial of $\widetilde{\Gamma}(\mathcal L, g_1^s)$. Similarly, we can also obtain those of $\overline{\Gamma}(\mathcal L', g_1^s)$, $\widetilde{\Gamma}(\mathcal L, g_2^t)$ and $\overline{\Gamma}(\mathcal L', g_2^t)$.

For the cyclic expurgated Goppa code $\widetilde{\Gamma}(\mathcal L, g_1^s)$, we shall prove that the polynomial $u_1(x)$ in (\ref{eq u_1}) is its generator polynomial. Now we consider the parity-check matrix of $\widetilde{\Gamma}(\mathcal L, g_1^s)$ as follows:
$$\widetilde H=
\left(\begin{array}{cccc}
\frac 1{(\alpha_1+\frac{a+\rho}c)^s} & \frac 1{(\alpha_2+\frac{a+\rho}c)^s}& \cdots & \frac 1{(\alpha_n+\frac{a+\rho}c)^s} \\
\frac {\alpha_1}{(\alpha_1+\frac{a+\rho}c)^s} & \frac {\alpha_2}{(\alpha_2+\frac{a+\rho}c)^s} & \cdots & \frac {\alpha_n}{(\alpha_n+\frac{a+\rho}c)^s} \\
\vdots & \vdots &   &  \vdots \\
\frac {\alpha_1^s}{(\alpha_1+\frac{a+\rho}c)^s} & \frac {\alpha_2^s}{(\alpha_2+\frac{a+\rho}c)^s} & \cdots & \frac {\alpha_n^s}{(\alpha_n+\frac{a+\rho}c)^s}
\end{array}\right).
$$
Since $1,x,\ldots,x^s$ and $1,x+\frac{a+\rho}c,\ldots,(x+\frac{a+\rho}c)^s$ are two sets of bases of $\Bbb F_q[x]_{\leq s}$ over $\Bbb F_q$, performing elementary row operations over $\Bbb F_q$ on the above matrix, we can obtain the equivalent matrix over $\Bbb F_q$:
$$
\widetilde H_1=\left(\begin{array}{cccc}
1 & 1 & \ldots & 1 \\
\frac 1{\alpha_1+\frac{a+\rho}c} & \frac 1{\alpha_2+\frac{a+\rho}c} & \cdots & \frac 1{\alpha_n+\frac{a+\rho}c} \\
\vdots & \vdots &   &  \vdots \\
\frac 1{(\alpha_1+\frac{a+\rho}c)^{s-1}} & \frac 1{(\alpha_2+\frac{a+\rho}c)^{s-1}} & \cdots & \frac 1{(\alpha_n+\frac{a+\rho}c)^{s-1}} \\
\frac 1{(\alpha_1+\frac{a+\rho}c)^s} & \frac 1{(\alpha_2+\frac{a+\rho}c)^s}& \cdots & \frac 1{(\alpha_n+\frac{a+\rho}c)^s}
\end{array}\right).
$$
Then $\widetilde H_1$ is also the parity-check matrix of $\widetilde{\Gamma}(\mathcal L, g_1^s)$.

Hence $\boldsymbol c=(c_1,\ldots,c_n) \in \widetilde{\Gamma}(\mathcal L, g_1^s)$ if and only if $\widetilde H_1 \boldsymbol c^\mathrm T=0$ if and only if $\boldsymbol c$ is a solution of the  system of  equations
\begin{equation} \label{Th2,eq1}
\left\{\begin{array}{ll}
&c_1+c_2+\cdots+c_n=0. \\
&\frac{c_1}{\alpha_1+\frac{a+\rho}c}+\cdots+\frac{c_n}{\alpha_n+\frac{a+\rho}c}=0, \\
& \cdots \\
&\frac{c_1}{(\alpha_1+\frac{a+\rho}c)^{s-1}}+\cdots+\frac{c_n}{(\alpha_n+
\frac{a+\rho}c)^{s-1}}=0, \\
&\frac{c_1}{(\alpha_1+\frac{a+\rho}c)^s}+\cdots+\frac{c_n}{(\alpha_n+\frac{a+\rho}c)^s}
=0.
\end{array}\right.
\end{equation}

Let $B=\left(\begin{array}{cc}
0 & 1 \\
1 & \frac{a+\rho}c
\end{array}\right)$
and
$$
D=BAB^{-1}=\left(\begin{array}{cc}
a+d+\rho & c \\
\frac{\rho^2+(a+d)\rho+1}c & \rho
\end{array}\right)=\left(\begin{array}{cc}
\rho^{-2} & c\rho^{-1} \\
0 & 1
\end{array}\right)\triangleq \left(\begin{array}{cc}
e & f \\
0 & 1
\end{array}\right),
$$
where $e=\rho^{-2}$ and $f=c\rho^{-1}$. Then $\ord(D)=\ord(A)=\ord(e)=n$ and
$$
D^i=BA^iB^{-1}=\left(\begin{array}{cc}
e^i & f\frac{e^i-1}{e-1} \\
0 & 1
\end{array}\right), 1\leq i \leq n-1.
$$

In (\ref{Th2,eq1}), $\frac 1{\alpha_i+\frac{a+\rho}c}=B(\alpha_i)=B(A^{i-1}\alpha_1)=BA^{i-1}B^{-1}B(\alpha_1)
=D^{i-1}B(\alpha_1)$, $1\le i\le n$. Denote $\beta=B(\alpha_1)=\frac 1{\alpha_1+\frac{a+\rho}c}$.

Hence the system (\ref{Th2,eq1}) is equivalent to the system
\begin{equation} \label{Th2,eq1'}
\left\{\begin{array}{ll}
&c_1+c_2+\cdots+c_n=0. \\
&c_1\beta+c_2D(\beta)+\cdots+c_nD^{n-1}(\beta)=0, \\
& \cdots \\
&c_1\beta^{s-1}+c_2(D(\beta))^{s-1}+\cdots+c_n(D^{n-1}(\beta))^{s-1}=0, \\
&c_1\beta^s+c_2(D(\beta))^s+\cdots+c_n(D^{n-1}(\beta))^s=0,
\end{array}\right.
\end{equation}
where $\beta=B(\alpha_1)=\frac 1{\alpha_1+\frac{a+\rho}c}$.

We shall prove  that $\boldsymbol c=(c_1,\ldots,c_n) \in \widetilde{\Gamma}(\mathcal L, g_1^s)$ if and only if $\boldsymbol c$ is a solution of the system of equations
\begin{equation} \label{Th2,eq2}
\left\{\begin{array}{ll}
&c_1+c_2+\cdots+c_n=0, \\
&c_1+c_2e+\cdots+c_ne^{n-1}=0, \\
& \cdots \\
&c_1+c_2e^{s-1}+\cdots+c_n(e^{s-1})^{n-1}=0, \\
&c_1+c_2e^s+\cdots+c_n(e^s)^{n-1}=0,
\end{array}\right.
\end{equation}
 where $e=\rho^{-2}$. It means that the system in (\ref{Th2,eq1'}) is equivalent to the system in (\ref{Th2,eq2}). In the following, we shall prove that the system in (\ref{Th2,eq1'}) is equivalent to the system in (\ref{Th2,eq2}) by induction on $s=\deg g(x)$.

If $\deg g(x)=s=1$, then it  has been proved in Theorem \ref{Th 1}. We assume that  $s>1$ and that the statement is shown for $\deg g(x)$ equal to $s-1$, i.e.,
the system
\begin{equation}
\left\{\begin{array}{ll}
&c_1+c_2+\cdots+c_n=0. \\
&c_1\beta+c_2D(\beta)+\cdots+c_nD^{n-1}(\beta)=0, \\
& \cdots \\
&c_1\beta^{s-1}+c_2(D(\beta))^{s-1}+\cdots+c_n(D^{n-1}(\beta))^{s-1}=0,
\end{array}\right.
\end{equation} is equivalent to the system
\begin{equation}
\left\{\begin{array}{ll}
&c_1+c_2+\cdots+c_n=0, \\
&c_1+c_2e+\cdots+c_ne^{n-1}=0, \\
& \cdots \\
&c_1+c_2e^{s-1}+\cdots+c_n(e^{s-1})^{n-1}=0.
\end{array}\right.
\end{equation}
 Now we consider the case $\deg g(x)=s$.

Let  $s=s_1+1$, i.e., $s_1=s-1$, then the last equation in (\ref{Th2,eq1'}) is that
$$
\begin{aligned}
&0=c_1\beta^s+c_2(D(\beta))^s+\cdots+c_n(D^{n-1}(\beta))^s  \\
&=c_1\beta^{s_1}\beta+c_2(D(\beta))^{s_1}(e\beta+f)+\cdots+c_n(D^{n-1}(\beta))^{s_1}
(e^{n-1}\beta+
f\frac{e^{n-1}-1}{e-1})  \\
&=(c_1\beta^{s_1}+c_2(D(\beta))^{s_1} e +\cdots+ c_n(D^{n-1}(\beta))^{s_1} e^{n-1})\beta  \\
&+\frac f{e-1}(c_2(D(\beta))^{s_1} (e-1)+\cdots+c_n(D^{n-1}(\beta))^{s_1} (e^{n-1}-1))  \\
&=(c_1\beta^{s_1}+c_2(D(\beta))^{s_1} e+\cdots+c_n(D^{n-1}(\beta))^{s_1} e^{n-1})\beta  \\
&+\frac f{e-1}(c_2(D(\beta))^{s_1} e+\cdots+c_n(D^{n-1}(\beta))^{s_1} e^{n-1}-c_2(D(\beta))^{s_1}-\cdots-c_n(D^{n-1}(\beta))^{s_1})\\
&=(c_1\beta^{s_1}+c_2(D(\beta))^{s_1} e +\cdots+ c_n(D^{n-1}(\beta))^{s_1} e^{n-1})(\beta+\frac f{e-1}).
\end{aligned}
$$
Since $\beta+\frac f{e-1}=B(\alpha_1)+\frac f{e-1}\ne 0$ (otherwise, $\frac 1{\alpha_1+\frac{a+\rho}c} = \frac{c\rho^{-1}}{\rho^{-2}-1}$, i.e.,   $\alpha_1=\frac{a+\rho^{-1}}c$, which is contradictory with the  choice of $\alpha_1$),  the last equation in (\ref{Th2,eq1'}) is equivalent to
\begin{align}
&0=c_1\beta^{s_1}+c_2(D(\beta))^{s_1} e +\cdots+ c_n(D^{n-1}(\beta))^{s_1} e^{n-1}\nonumber\\
&=c_1\beta^{s_1}+c_2(e\beta+f\frac {e-1}{e-1})^{s_1} e+\cdots+c_n(e^{n-1}\beta+f\frac{e^{n-1}-1}{e-1})^{s_1} e^{n-1}\nonumber\\
&=(c_1+c_2e^s+\cdots+c_n(e^s)^{n-1})\beta^{s_1}\nonumber\\
&+\sum\limits_{i=1}^{s_1} \frac{C_{s_1}^i f^i \beta^{s_1-i}}{(e-1)^i}(c_2e^{s_1+1-i}(e-1)^i+\cdots+
c_n(e^{n-1})^{s_1+1-i}(e^{n-1}-1)^i),
\end{align}
where each $C_{s_1}^i$ is a combinatorial number.

For $1\leq i\leq s_1$,  $i=\sum\limits_{j=1}^k 2^{d_j}$ is called the $2-adic$ expansion of $i$, where $d_1<d_2<\cdots <d_k$, and $k$ is called the algebraic degree of $i$.  In (3.14),
\begin{eqnarray*}
&&c_2e^{s_1+1-i}(e-1)^i+\cdots+c_n(e^{n-1})^{s_1+1-i}(e^{n-1}-1)^i \\
&&=c_2e^{s-i}\prod\limits_{j=1}^k(e^{2^{d_j}}+1)+\cdots+c_n(e^{n-1})^{s-i}
\prod\limits_{j=1}^k((e^{n-1})^{2^{d_j}}+1)\\
&&=c_2e^s+c_3(e^s)^2+\cdots +c_n(e^{s})^{n-1}\\
&&+\sum_{j=1}^{2^k-1}(c_2e^{s-i+t_j}+c_3(e^{s-i+t_j})^2+\cdots+c_n(e^{s-i+t_j})^{n-1}),
\end{eqnarray*}
where each $t_j$ is a partial sum of $2^{d_1}, 2^{d_2},\ldots, 2^{d_k}$ and $t_j<i$. By the induction hypothesis, each  $c_2 e^{s-i+t_j}+c_3(e^{s-i+t_j})^2+\cdots+c_n (e^{s-i+t_j})^{n-1}=c_1$.  Hence the equation in (3.14) is equivalent to
\begin{eqnarray*}
&(c_1+c_2e^s+\cdots+c_n(e^s)^{n-1})(\beta^{s_1}+\sum\limits_{i=1}^{s_1} \frac{C_{s_1}^i f^i \beta^{s_1-i}}{(e-1)^i})\\
&=(c_1+c_2e^s+\cdots+c_n(e^s)^{n-1})(\beta+\frac f{e-1})^{s_1}=0,
\end{eqnarray*}
where $\beta=B(\alpha_1)=\frac 1{\alpha_1+\frac {a+\rho}c}$ and $\beta+\frac f{e-1}\ne 0$. Then the last equation in (\ref{Th2,eq1'}) holds if and only if $c_1+c_2e^s+\cdots+c_n(e^s)^{n-1}=0$ under the induction hypothesis, which means the system in (\ref{Th2,eq1'}) is equivalent to the system in (\ref{Th2,eq2}).

Therefore, $\boldsymbol c=(c_1,\ldots,c_n) \in \widetilde{\Gamma}(\mathcal L, g_1^s)$ if and only if $\boldsymbol c$ is a solution of the system (\ref{Th2,eq2}).
 Moreover, let $m_{\rho^{-i}}(x)$ be the minimal polynomials of $\rho^{-i}$ over $\Bbb F_2$, then they are also the minimal polynomials of  $e^i=\rho^{-2i}$ over $\Bbb F_2$, $i=1,2,\ldots, s$. Since $\widetilde{\Gamma}(\mathcal L, g_1^s)$ is a cyclic code of length $n$, whose elements satisfy the system (\ref{Th2,eq2}), its generator polynomial is $$u_1(x)=(x+1)\mbox{lcm}\{m_{\rho^{-i}}(x):i=1,2,\ldots, s\}.$$

  Similarly, we can get generator polynomial of $\overline{\Gamma}(\mathcal L', g_1^s)$ is $u_1(x)$ and generator polynomials of both $\widetilde{\Gamma}(\mathcal L, g_2^t)$ and $\overline{\Gamma}(\mathcal L', g_2^t)$ are $$u_2(x)=(x+1)\mbox{lcm}\{m_{\rho^i}(x): i=1,2,\ldots, t\},$$ where $m_{\rho^i}(x)$ are the minimal polynomials of $\rho^i$ over $\Bbb F_2$.

  If $\deg u_1(x)=n$, then the dimensions of both $\widetilde{\Gamma}(\mathcal L, g_1^s)$ and $\overline{\Gamma}(\mathcal L', g_1^s)$ are $n-\deg u_1(x)=0$, so both $\widetilde{\Gamma}(\mathcal L, g_1^s)$ and $\overline{\Gamma}(\mathcal L', g_1^s)$ are zero; similarly, both $\widetilde{\Gamma}(\mathcal L, g_1^t)$ and $\overline{\Gamma}(\mathcal L', g_1^t)$ are also zero if $\deg u_2(x)=n$.
\end{proof}

\begin{rem}\label{rb}
In Theorem \ref{Th 2}, since $\rho^j$ and $\rho^{2j}$ have the same minimal polynomial over $\Bbb F_2$ for every $j\in \Bbb Z$, (\ref{eq u_1}) and (\ref{eq u_2}) in Theorem \ref{Th 2} can be changed into
$$
\begin{aligned}
u_1(x)&=(x+1)\mbox{lcm}\{m_{\rho^{-i}}(x): i=1,2,\ldots, 2\lfloor\frac{s+1}2\rfloor\}\\
&=(x+1)\mbox{lcm}\{m_{\rho^{-i}}(x): i=1,3,\ldots, 2\lfloor\frac{s+1}2\rfloor-1\}
\end{aligned}
$$
and
$$
\begin{aligned}
u_2(x)&=(x+1)\mbox{lcm}\{m_{\rho^i}(x): i=1,2,\ldots, 2\lfloor\frac{t+1}2\rfloor\}\\
&=(x+1)\mbox{lcm}\{m_{\rho^i}(x): i=1,3,\ldots, 2\lfloor\frac{t+1}2\rfloor-1\}.
\end{aligned}
$$
\end{rem}

By Definition \ref{BCH}, Theorem \ref{Th 2}, and Remark \ref{rb}, we can immediately obtain the following result.

\begin{cor}
(1) In Theorem \ref{Th 2}, the expurgated Goppa code $\widetilde{\Gamma}(\mathcal L, g_1^s)$ and the extended Goppa code $\overline{\Gamma}(\mathcal L', g_1^s)$ are BCH codes of length $n$ with minimum distances
$d\ge 2\lfloor\frac{s+1}2\rfloor+2$.

(2) In Theorem \ref{Th 2}, the expurgated Goppa code $\widetilde{\Gamma}(\mathcal L, g_1^t)$ and the extended Goppa code $\overline{\Gamma}(\mathcal L', g_1^t)$ are BCH codes of length $n$ with minimum distances
$d\ge 2\lfloor\frac{t+1}2\rfloor+2$.
\end{cor}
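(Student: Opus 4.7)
The plan is to deduce both parts of the corollary by combining the explicit generator polynomials from Theorem \ref{Th 2} (rewritten via Remark \ref{rb}) with the BCH code definition given in Definition \ref{BCH}. First I would invoke Theorem \ref{Th 2} together with Remark \ref{rb} to replace the generator polynomial of both $\widetilde{\Gamma}(\mathcal L, g_1^s)$ and $\overline{\Gamma}(\mathcal L', g_1^s)$ by
\[
u_1(x)=(x+1)\,\mbox{lcm}\bigl\{m_{\rho^{-i}}(x):i=1,2,\ldots,2\lfloor (s+1)/2\rfloor\bigr\}.
\]
Setting $\zeta=\rho^{-1}$, Lemma \ref{lem LX1} gives $\ord(\zeta)=\ord(\rho)=\ord(A)=n$, so $\zeta$ is a primitive $n$-th root of unity in a suitable extension of $\Bbb F_2$.

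Next I would identify the roots of $u_1(x)$: the factor $x+1$ contributes $\zeta^{0}=1$, while each $m_{\rho^{-i}}(x)$ contributes $\zeta^{i}$ (along with its Galois conjugates over $\Bbb F_2$). Consequently $u_1(x)$ has the block
\[
\zeta^{0},\;\zeta^{1},\;\zeta^{2},\;\ldots,\;\zeta^{2\lfloor (s+1)/2\rfloor}
\]
of $2\lfloor (s+1)/2\rfloor+1$ consecutive powers of $\zeta$ among its roots. Applying Definition \ref{BCH} with starting exponent $a=0$ and designed distance $\delta=2\lfloor (s+1)/2\rfloor+2$, I conclude that $\widetilde{\Gamma}(\mathcal L, g_1^s)$ and $\overline{\Gamma}(\mathcal L', g_1^s)$ are BCH codes of length $n$ with designed distance $\delta$, and the BCH bound yields the stated minimum distance inequality $d\ge 2\lfloor (s+1)/2\rfloor+2$. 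Part (2) is then handled by the same argument verbatim, with $\rho$ playing the role of $\rho^{-1}$, $t$ replacing $s$, and $u_2(x)$ from Theorem \ref{Th 2} in place of $u_1(x)$.

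I do not foresee any serious technical obstacle; the only point that warrants care is verifying that the set of explicit exponents $\{0,1,\ldots,2\lfloor (s+1)/2\rfloor\}$ is a genuinely consecutive block. This is precisely what the alternative form of $u_1(x)$ in Remark \ref{rb} was designed to guarantee, since when $s$ is odd it absorbs the missing exponent $s+1$ (whose minimal polynomial coincides with that of $\rho^{-(s+1)/2}$ already present in the list). One should also note in passing that the inequality $\delta\le n$ required by Definition \ref{BCH} is automatic from the standing assumption $s<n-1$ inherited from Proposition \ref{prop g}.
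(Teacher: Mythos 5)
Your argument is correct and is exactly the route the paper takes: the paper offers no written proof beyond ``By Definition \ref{BCH}, Theorem \ref{Th 2}, and Remark \ref{rb}, we can immediately obtain the following result,'' and your proposal simply makes explicit the consecutive root block $\zeta^0,\zeta^1,\ldots,\zeta^{2\lfloor (s+1)/2\rfloor}$ with $\zeta=\rho^{-1}$ that this citation chain relies on (reading $g_1^t$ in part (2) as the evident typo for $g_2^t$). The only nitpick is your closing remark that $\delta\le n$ is automatic from $s<n-1$: since $n$ is odd, the case $s=n-2$ gives $\delta=n+1$, where the bound is only vacuously true because the code is then zero, as Theorem \ref{Th 2} already notes.
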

\begin{cor}\label{Th2,cor,irr}
Let $A\in PGL_2(\Bbb F_q)$ be the matrix, $G=\langle A \rangle$ the subgroup in Theorem \ref{Th 2}, and $|\lambda E_2-A|=\lambda^2+(a+d)\lambda+1=(\lambda-\rho)(\lambda-\rho^{-1})$ irreducible over $\Bbb F_q$. Let
$$
\mathcal L'=O_{\infty}\backslash \{\infty\}=\{ P(\rho ),\ldots, P(\rho^{n-1} )\}=\{\gamma, A(\gamma), \ldots, A^{n-2}(\gamma)\} \subseteq \Bbb F_{q^2}
$$
and
$$\mathcal L=O_{\alpha}=\{\alpha_1=\alpha, \alpha_2=A(\alpha),\ldots,\alpha_n=A^{n-1}(\alpha)\} \subseteq \Bbb F_{q^2}$$
be the orbit of $\alpha$ under the action of $G$ if $\alpha \notin \{\frac{a+\rho}c,\frac{a+\rho^{-1}}c\}\cup O_{\infty}$.

 Then the conclusions of Theorem \ref{Th 2} still hold but with the Goppa polynomials $g_1^s(x), g_2^t(x)\in \Bbb F_{q^2}[x]$.
 Moreover, let $\lambda^2+(a+d)\lambda+1$ be irreducible over $\Bbb F_q$, $\ord(A)=n\mid (q+1)$, then by Remark \ref{WY}, $m_{\rho^i}(x)=m_{\rho^{-i}}(x),i=1,2,\ldots, j, j=\max \{s,t\}.$
\end{cor}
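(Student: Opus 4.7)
The plan is to adapt the proof of Theorem~\ref{Th 2} essentially verbatim, noting that reducibility of $\lambda^2+(a+d)\lambda+1$ was never used in an essential way there. What the argument actually used was the existence of two distinct eigenvalues $\rho,\rho^{-1}$ (guaranteed by $\ord(A)>2$ via Lemma~\ref{lem LX1}(1)) and the conjugating matrix $B$ sending $A$ to an affine map $D=\left(\begin{smallmatrix} e & f \\ 0 & 1 \end{smallmatrix}\right)$ with $e=\rho^{-2}$, $f=c\rho^{-1}$. All of these constructions go through verbatim when the eigenvalues lie in $\Bbb F_{q^2}\setminus \Bbb F_q$; the only visible change is that $g_1(x)=x+\tfrac{a+\rho}{c}$ and $g_2(x)=x+\tfrac{a+\rho^{-1}}{c}$ now lie in $\Bbb F_{q^2}[x]$, and the relevant orbits live in $\Bbb F_{q^2}$ by Corollary~\ref{cor orbit 1}.

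First I would set up the same parity-check matrices $\widetilde H$ for $\widetilde\Gamma(\mathcal L,g_1^s)$ and $\overline\Gamma(\mathcal L',g_1^s)$, now with entries in $\Bbb F_{q^2}$, and perform identical elementary row operations over $\Bbb F_{q^2}$ to reduce them to the analogue of $\widetilde H_1$. Since the codewords are constrained to lie in $\Bbb F_2^n$, the condition $\widetilde H_1 \mathbf c^{\mathrm T}=0$ translates to exactly the same system~(\ref{Th2,eq2}) in $e=\rho^{-2}$. The induction on $s$ carries over with no change, because the key non-vanishing $\beta+\tfrac{f}{e-1}\neq 0$ used at each step is equivalent to $\alpha_1\neq \tfrac{a+\rho^{-1}}{c}$, which is guaranteed by the standing hypothesis $\alpha\notin\{\tfrac{a+\rho}{c},\tfrac{a+\rho^{-1}}{c}\}\cup O_\infty$. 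An identical argument handles $g_2^t$. The generator polynomials are then the same expressions $u_1(x)=(x+1)\mathrm{lcm}\{m_{\rho^{-i}}(x):1\le i\le s\}$ and $u_2(x)=(x+1)\mathrm{lcm}\{m_{\rho^{i}}(x):1\le i\le t\}$, which live in $\Bbb F_2[x]$ regardless of whether $\rho\in \Bbb F_q$ or $\rho\in \Bbb F_{q^2}\setminus\Bbb F_q$.

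For the ``moreover'' assertion, I would apply Lemma~\ref{lem LX1}(2): since $|\lambda E_2-A|$ is irreducible over $\Bbb F_q$, we have $n\mid (q+1)$. Writing $q=2^l$, this yields $2^l\equiv -1\pmod n$, and since $\ord(\rho)=n$ one gets $\rho^{2^l}=\rho^{-1}$, hence $(\rho^i)^{2^l}=\rho^{-i}$ for every integer $i$. Thus $\rho^{-i}$ is a Frobenius conjugate of $\rho^i$ over $\Bbb F_2$, so $m_{\rho^i}(x)=m_{\rho^{-i}}(x)$ for $i=1,2,\ldots,j$ with $j=\max\{s,t\}$; alternatively this is condition~(2) of Remark~\ref{WY}.

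I do not anticipate any serious obstacle, since the proof is structurally the same as that of Theorem~\ref{Th 2}. The only delicate point is bookkeeping: ensuring that although the parity-check matrix manipulations happen over $\Bbb F_{q^2}$, the resulting binary code and its generator polynomial remain correctly described over $\Bbb F_2$. The symmetry $m_{\rho^i}=m_{\rho^{-i}}$ in the irreducible case then causes $u_1(x)$ and $u_2(x)$ to coincide, mirroring the reducible case.
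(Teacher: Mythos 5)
Your proposal is correct and follows essentially the same route the paper intends: the paper gives no separate proof for this corollary, treating it as immediate because the argument of Theorem \ref{Th 2} never uses $\rho\in\Bbb F_q$ (only $\rho\ne\rho^{-1}$ and the conjugation $D=BAB^{-1}$), and the ``moreover'' part is exactly condition (2) of Remark \ref{WY} via $n\mid(q+1)$, i.e.\ $\rho^{i\cdot 2^l}=\rho^{-i}$. Your added bookkeeping about the row operations happening over $\Bbb F_{q^2}$ while the code and generator polynomial stay over $\Bbb F_2$ is the right point to check and is handled correctly.
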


\begin{exa}
In Example \ref{eg 1}, the minimal polynomials of $\rho^{-3}$, $\rho^3$, $\rho^{-5}$, $\rho^5$, $\rho^{-7}$, and $\rho^7$ over $\Bbb F_2$ are $m_{\rho^{-3}}(x)=x^3+x^2+1$, $m_{\rho^3}(x)=x^3+x+1$, $m_{\rho^{-5}}(x)=m_\rho(x)=x^6+x^5+x^4+x^2+1$, $m_{\rho^5}(x)=m_{\rho^{-1}}(x)=x^6+x^4+x^2+x+1$, $m_{\rho^{-7}}(x)=m_{\rho^7}(x)=x^2+x+1$, respectively. Let $g_1(x)=x+\frac{\alpha^5+\alpha^3}{\alpha^{13}}\in \Bbb F_{2^6}[x]$, $g_2(x)=x+\frac{\alpha^5+\alpha^{-3}}{\alpha^{13}} \in \Bbb F_{2^6}[x]$.

(1)\begin{eqnarray*}
&&\widetilde{\Gamma}(\mathcal L, g_1^3)=\widetilde{\Gamma}(\mathcal L, g_1^4)=\overline{\Gamma}(\mathcal L', g_1^3)=\overline{\Gamma}(\mathcal L', g_1^4)\\
&&=\langle(x+1)(x^3+x^2+1)(x^6+x^4+x^2+x+1)\rangle\\
&&=\langle(x+1)\mbox{lcm}(m_{\rho^{-1}}(x),m_{\rho^{-3}}(x))\rangle,
\end{eqnarray*}
\begin{eqnarray*}
&&\widetilde{\Gamma}(\mathcal L, g_2^3)=\widetilde{\Gamma}(\mathcal L, g_2^4)=\overline{\Gamma}(\mathcal L', g_2^3)=\overline{\Gamma}(\mathcal L', g_2^4)\\
&&=\langle(x+1)(x^3+x+1)(x^6+x^5+x^4+x^2+x+1)\rangle\\
&&=\langle(x+1)\mbox{lcm}(m_{\rho}(x),m_{\rho^3}(x))\rangle,
\end{eqnarray*}
and they are all $[21, 11, 6]$ cyclic codes by the Magma program.

(2) \begin{eqnarray*}
&&\widetilde{\Gamma}(\mathcal L, g_1^5)=\widetilde{\Gamma}(\mathcal L, g_1^6)=\overline{\Gamma}(\mathcal L', g_1^5)=\overline{\Gamma}(\mathcal L', g_1^6)\\
&&=\langle(x+1)(x^3+x^2+1)(x^6+x^4+x^2+x+1)(x^6+x^5+x^4+x^2+1)\rangle\\
&&=\langle(x+1)\mbox{lcm}\{m_{\rho^{-i}}(x):i=1,3,5\}\rangle,
\end{eqnarray*}
\begin{eqnarray*}
&&\widetilde{\Gamma}(\mathcal L, g_2^5)=\widetilde{\Gamma}(\mathcal L, g_2^6)=\overline{\Gamma}(\mathcal L', g_2^5)=\overline{\Gamma}(\mathcal L', g_2^6)\\
&&=\langle(x+1)(x^3+x+1)(x^6+x^4+x^2+x+1)(x^6+x^5+x^4+x^2+1)\rangle\\
&&=\langle(x+1)\mbox{lcm}\{m_{\rho^{i}}(x):i=1,3,5\}\rangle,
\end{eqnarray*}
and they are all $[21, 5, 10]$ cyclic codes by the Magma program.

(3) \begin{eqnarray*}
&&\widetilde{\Gamma}(\mathcal L, g_1^7)=\widetilde{\Gamma}(\mathcal L, g_1^8)=\overline{\Gamma}(\mathcal L', g_1^7)=\overline{\Gamma}(\mathcal L', g_1^8)\\
&&=\langle(x+1)(x^2+x+1)(x^3+x^2+1)(x^6+x^4+x^2+x+1)(x^6+x^5+x^4+x^2+1)\rangle\\
&&=\langle(x+1)\mbox{lcm}\{m_{\rho^{-i}}(x):i=1,3,5,7\}\rangle,
\end{eqnarray*}
\begin{eqnarray*}
&&\widetilde{\Gamma}(\mathcal L, g_2^7)=\widetilde{\Gamma}(\mathcal L, g_2^8)=\overline{\Gamma}(\mathcal L', g_2^7)=\overline{\Gamma}(\mathcal L', g_2^8)\\
&&=\langle(x+1)(x^2+x+1)(x^3+x+1)(x^6+x^4+x^2+x+1)(x^6+x^5+x^4+x^2+1)\rangle\\
&&=\langle(x+1)\mbox{lcm}\{m_{\rho^{i}}(x):i=1,3,5,7\}\rangle,
\end{eqnarray*}
and they are all $[21, 3, 12]$ cyclic codes by the Magma program.
\end{exa}

\subsection{$g(x)=g_1(x)^sg_2(x)^t$, $s,t \in \mathbb{N}$ and $s+ t<n-1$}\

In the subsection, we shall assume that $g(x)=g_1(x)^sg_2(x)^t$, where $s,t \in \mathbb{N}$ and $s+t<n-1$. We shall find generator polynomials of cyclic expurgated or extended Goppa codes in the case of $\sigma^j=1$ of Section 2.3.

\begin{thm} \label{Th3}
Let $A=\left(\begin{array}{cc}
a & b \\
c & d
\end{array}\right) \in PGL_2(\Bbb F_q)$ be of order $n>2$ with $|A|=ad+bc=1$ and $c\ne 0$, $G=\langle A\rangle$ a subgroup of $PGL_2(\Bbb F_q)$, and $|\lambda E_2-A|=\lambda^2+(a+d)\lambda+1=(\lambda-\rho)(\lambda-\rho^{-1})$ reducible over $\Bbb F_q$.
Let
$$
\mathcal L'=O_{\infty}\backslash \{\infty\}=\{ P(\rho),\ldots, P(\rho^{n-1} )\}=\{\gamma, A(\gamma), \ldots, A^{n-2}(\gamma)\} \subseteq \Bbb F_q
$$
and
$$\mathcal L=O_{\alpha}=\{\alpha_1=\alpha, \alpha_2=A(\alpha),\ldots,\alpha_n=A^{n-1}(\alpha)\} \subseteq \Bbb F_{q}$$
be the orbit of $\alpha$ under the action of $G$ if $\alpha \notin \{\frac{a+\rho}c,\frac{a+\rho^{-1}}c\}\cup O_{\infty}$.

If $g(x)=g_1(x)^sg_2(x)^t=(x+\frac{a+\rho}c)^s (x+\frac{a+\rho^{-1}}c)^t \in \Bbb F_q[x]$, then both the expurgated Goppa code $\widetilde{\Gamma}(\mathcal L, g)$ and the extended Goppa code $\overline{\Gamma}(\mathcal L', g)$ are cyclic of length $n$ and their generator polynomials are
\begin{equation} \label{eq h}
u(x)=\mbox{lcm}(u_1(x), u_2(x)),
\end{equation}
where $\widetilde{\Gamma}(\mathcal L, g_1^s)=\overline{\Gamma}(\mathcal L', g_1^s)=\langle u_1(x)\rangle$ and $\widetilde{\Gamma}(\mathcal L, g_2^t)=\overline{\Gamma}(\mathcal L', g_2^t)=\langle u_2(x)\rangle$, $u_1(x)$ and $u_2(x)$ are defined as (\ref{eq u_1}) and (\ref{eq u_2}).

Moreover, if $\deg u(x)=n$, then both $\widetilde{\Gamma}(\mathcal L, g)$ and $\overline{\Gamma}(\mathcal L', g)$ are zero.
\end{thm}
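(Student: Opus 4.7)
The plan is to reduce the result to the previous Theorem \ref{Th 2} by exploiting the factorization $g(x)=g_1(x)^sg_2(x)^t$ with $\gcd(g_1,g_2)=1$. First I would observe that because $\frac{a+\rho}{c}\neq\frac{a+\rho^{-1}}{c}$ (since $\rho\neq\rho^{-1}$ when $\mathrm{ord}(A)=n>2$), the two linear polynomials $g_1(x),g_2(x)$ are coprime in $\mathbb{F}_q[x]$, hence so are $g_1(x)^s$ and $g_2(x)^t$. By the Chinese Remainder Theorem applied to the defining congruence $\sum_i \frac{c_i}{x-\alpha_i}\equiv 0 \pmod{g(x)}$, this congruence is equivalent to the pair of congruences modulo $g_1(x)^s$ and modulo $g_2(x)^t$, which yields the identity
\[
\Gamma(\mathcal L, g)=\Gamma(\mathcal L, g_1^s)\cap\Gamma(\mathcal L, g_2^t).
\]
Intersecting both sides with the single constraint $\sum_i c_i=0$ preserves the equality, so
\[
\widetilde\Gamma(\mathcal L, g)=\widetilde\Gamma(\mathcal L, g_1^s)\cap\widetilde\Gamma(\mathcal L, g_2^t),
\]
and similarly for the extended versions $\overline\Gamma(\mathcal L', g)$ (since the extra coordinate and the parity constraint are common to both factor codes).

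Next I would invoke Theorem \ref{Th 2}: the right-hand factor codes are cyclic of length $n$ with generator polynomials $u_1(x)$ and $u_2(x)$ as in $(\ref{eq u_1})$ and $(\ref{eq u_2})$. Since cyclicity is preserved under intersection, $\widetilde\Gamma(\mathcal L, g)$ and $\overline\Gamma(\mathcal L', g)$ are cyclic of length $n$. Now I appeal to the standard fact that for two cyclic codes $\langle u_1(x)\rangle$ and $\langle u_2(x)\rangle$ in $\mathbb F_2[x]/(x^n-1)$ their intersection is generated by $\mathrm{lcm}(u_1(x),u_2(x))$; indeed, a polynomial $c(x)$ lies in both ideals precisely when it is divisible by each of $u_1(x)$ and $u_2(x)$, i.e.\ by their lcm (modulo $x^n-1$). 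This gives the desired formula $u(x)=\mathrm{lcm}(u_1(x),u_2(x))$.

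Finally, for the zero-code statement: if $\deg u(x)=n$, then $u(x)$ coincides with $x^n-1$ up to a unit (since $u(x)\mid x^n-1$), so the ideal $\langle u(x)\rangle$ in $\mathbb F_2[x]/(x^n-1)$ is trivial, forcing $\dim\widetilde\Gamma(\mathcal L,g)=\dim\overline\Gamma(\mathcal L',g)=0$. Alternatively this follows from the analogous zero-code statement in Theorem \ref{Th 2} applied to the bigger parity-check constraint system.

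The main technical point that needs care is the step $\widetilde\Gamma(\mathcal L, g)=\widetilde\Gamma(\mathcal L, g_1^s)\cap\widetilde\Gamma(\mathcal L, g_2^t)$ and its extended counterpart, because one must be sure that the CRT decomposition of the Goppa condition is compatible with the extra parity-check row and (for the extended case) the added coordinate carrying $g(\infty)^{-1}=g_1(\infty)^{-s}g_2(\infty)^{-t}$. Once this compatibility is verified (which amounts to noting that the all-ones row and the $g(\infty)^{-1}$ column split multiplicatively across the factors in the same way), the rest of the proof is formal and the main substance of the result is entirely inherited from Theorem \ref{Th 2}.
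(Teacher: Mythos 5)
Your proposal is correct and follows essentially the same route as the paper: both reduce to Theorem \ref{Th 2} via the coprime factorization $g=g_1^sg_2^t$, the identity $\Gamma(\mathcal L, g)=\Gamma(\mathcal L, g_1^s)\cap\Gamma(\mathcal L, g_2^t)$, and the fact that the intersection of cyclic codes is generated by the lcm of their generator polynomials. Your treatment is somewhat more explicit than the paper's (which simply asserts the intersection identity from $\gcd(g_1,g_2)=1$ and chains the equalities), but there is no substantive difference in the argument.
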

\begin{proof}
 Let $\lambda^2+(a+d)\lambda+1=(\lambda-\rho)(\lambda-\rho^{-1})$ be reducible over $\Bbb F_q$, then by Proposition \ref{prop g},
$$g(x)=g_1(x)^sg_2(x)^t \in \Bbb F_q[x]$$
satisfies the condition in (\ref{eq 2.1}). Hence by Lemma \ref{lem cyclic}, both the expurgated Goppa code $\widetilde{\Gamma}(\mathcal L, g)$ and the extended Goppa code $\overline{\Gamma}(\mathcal L', g)$ are cyclic of length $n$.

Furthermore, by $\gcd(g_1(x),g_2(x))=1$, $\Gamma(\mathcal L, g)=\Gamma(\mathcal L, g_1^s) \cap \Gamma(\mathcal L, g_2^t)$, then
$$
\widetilde{\Gamma}(\mathcal L,g)=\widetilde{\Gamma}(\mathcal L, g_1^sg_2^t)=\widetilde{\Gamma}(\mathcal L, g_1^s) \cap \widetilde{\Gamma}(\mathcal L, g_2^t)=\langle \mbox{lcm}(u_1(x), u_2(x))\rangle,
$$
 $$
  \overline{\Gamma}(\mathcal L',g)=\overline{\Gamma}(\mathcal L', g_1^sg_2^t)=\overline{\Gamma}(\mathcal L', g_1^s) \cap \overline{\Gamma}(\mathcal L', g_2^t)=\langle \mbox{lcm}(u_1(x), u_2(x))\rangle,
 $$
  where $\widetilde{\Gamma}(\mathcal L, g_1^s)=\overline{\Gamma}(\mathcal L, g_1^s)=\langle u_1(x)\rangle$ and
 $\widetilde{\Gamma}(\mathcal L, g_2^t)=\overline{\Gamma}(\mathcal L', g_2^t)=\langle u_2(x)\rangle.$

 If $\deg u(x)=n$, then the dimensions of both $\widetilde{\Gamma}(\mathcal L, g)$ and $\overline{\Gamma}(\mathcal L', g)$ are $n-\deg u(x)=0$, so both $\widetilde{\Gamma}(\mathcal L, g)$ and $\overline{\Gamma}(\mathcal L', g)$ are zero.
\end{proof}

\begin{cor}\label{Th3,cor,irr}
Let $A\in PGL_2(\Bbb F_q)$ be the matrix, $G=\langle A \rangle$ the subgroup in Theorem \ref{Th3}, and $|\lambda E_2-A|=\lambda^2+(a+d)\lambda+1=(\lambda-\rho)(\lambda-\rho^{-1})$ irreducible over $\Bbb F_q$. Let
$$
\mathcal L'=O_{\infty}\backslash \{\infty\}=\{ P(\rho ),\ldots, P(\rho^{n-1} )\}=\{\gamma, A(\gamma), \ldots, A^{n-2}(\gamma)\} \subseteq \Bbb F_{q^2}
$$
and
$$\mathcal L=O_{\alpha}=\{\alpha_1=\alpha, \alpha_2=A(\alpha),\ldots,\alpha_n=A^{n-1}(\alpha)\} \subseteq \Bbb F_{q^2}$$
be the orbit of $\alpha$ under the action of $G$ if $\alpha \notin \{\frac{a+\rho}c,\frac{a+\rho^{-1}}c\}\cup O_{\infty}$.

 Then the conclusions of Theorem \ref{Th3} still hold but with the Goppa polynomials $g(x)=g_1^s(x)g_2^t(x)\in \Bbb F_{q^2}[x]$.
 Moreover, let $\lambda^2+(a+d)\lambda+1$ be irreducible over $\Bbb F_q$, $\ord(A)=n\mid (q+1)$, then by Remark \ref{WY}, $m_{\rho^i}(x)=m_{\rho^{-i}}(x),i=1,2,\ldots, j, j=\max \{s,t\}.$
\end{cor}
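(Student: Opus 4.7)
The plan is to run the proof of Theorem \ref{Th3} in exactly the same way, with the only adjustment being that the ground field for the Goppa polynomials is enlarged from $\Bbb F_q$ to $\Bbb F_{q^2}$. Concretely, when $\lambda^2+(a+d)\lambda+1$ is irreducible over $\Bbb F_q$, Lemma \ref{lem LX1}(2) tells us that $\rho, \rho^{-1}\in\Bbb F_{q^2}\setminus\Bbb F_q$ and that $n=\ord(A)\mid(q+1)$, so the linear factors $g_1(x)=x+\frac{a+\rho}{c}$ and $g_2(x)=x+\frac{a+\rho^{-1}}{c}$ introduced in \eqref{eq g_i} belong to $\Bbb F_{q^2}[x]$. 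Consequently $g(x)=g_1(x)^sg_2(x)^t\in\Bbb F_{q^2}[x]$, and by Proposition \ref{prop g} together with Lemma \ref{lem g(x)} it satisfies the functional equation \eqref{eq 2.1}. Choosing the support set $\mathcal{L}$ (respectively $\mathcal{L}'$) from Corollary \ref{cor orbit 1}, Lemma \ref{lem cyclic} then guarantees that $\widetilde{\Gamma}(\mathcal{L},g)$ and $\overline{\Gamma}(\mathcal{L}',g)$ are cyclic of length $n$.

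Next I would determine the generator polynomials by the same intersection trick used in Theorem \ref{Th3}. Since $\rho\ne\rho^{-1}$ we have $\gcd(g_1(x),g_2(x))=1$ in $\Bbb F_{q^2}[x]$, so
$$
\Gamma(\mathcal{L},g_1^sg_2^t)=\Gamma(\mathcal{L},g_1^s)\cap\Gamma(\mathcal{L},g_2^t),
$$
and intersecting with the all-sum-zero constraint yields
$$
\widetilde{\Gamma}(\mathcal{L},g)=\widetilde{\Gamma}(\mathcal{L},g_1^s)\cap\widetilde{\Gamma}(\mathcal{L},g_2^t),\qquad \overline{\Gamma}(\mathcal{L}',g)=\overline{\Gamma}(\mathcal{L}',g_1^s)\cap\overline{\Gamma}(\mathcal{L}',g_2^t).
$$
Applying Corollary \ref{Th2,cor,irr}, each of the factor codes is generated by the corresponding $u_i(x)\in\Bbb F_2[x]$ from \eqref{eq u_1}--\eqref{eq u_2}, so the intersection is generated by $u(x)=\mbox{lcm}(u_1(x),u_2(x))$, exactly as in \eqref{eq h}. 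The degenerate case $\deg u(x)=n$ forces the dimension to drop to $0$, giving the zero code as claimed.

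Finally, for the additional assertion $m_{\rho^i}(x)=m_{\rho^{-i}}(x)$, I would invoke Remark \ref{WY}: from $n\mid (q+1)=2^l+1$ we get $2^l\equiv-1\pmod n$, hence also $2^{li}\equiv(-1)^i\pmod n$, so for every $i\in\{1,\ldots,\max\{s,t\}\}$ the cyclotomic coset of $\rho^i$ modulo $2$ contains $\rho^{-i}$, which means $\rho^i$ and $\rho^{-i}$ share the same minimal polynomial over $\Bbb F_2$. This also confirms a posteriori that $u_1(x),u_2(x)\in\Bbb F_2[x]$ even though we worked initially in $\Bbb F_{q^2}[x]$.

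The whole argument is essentially bookkeeping once Theorem \ref{Th3} and Corollary \ref{Th2,cor,irr} are in place; the only genuine content is the verification that the Vandermonde-type manipulations used in the proof of Theorem \ref{Th 2} remain valid with entries in the extension $\Bbb F_{q^2}$. This is automatic because those manipulations only rely on $\rho\ne\rho^{-1}$, on $B(\alpha_1)+\tfrac{f}{e-1}\ne 0$ (which still fails precisely when $\alpha_1=\frac{a+\rho^{-1}}{c}$, excluded by hypothesis), and on the induction on $s$, none of which cares whether $\rho\in\Bbb F_q$ or $\rho\in\Bbb F_{q^2}\setminus\Bbb F_q$. So the main ``obstacle'', such as it is, lies in checking that the generator polynomial stays in $\Bbb F_2[x]$ after enlarging the field; this is exactly what the $m_{\rho^i}=m_{\rho^{-i}}$ identity delivers.
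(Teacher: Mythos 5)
Your proposal is correct and matches the paper's intent exactly: the paper gives no separate proof for this corollary, since it is understood to follow by rerunning the proof of Theorem \ref{Th3} (the coprime-factor intersection and lcm of generators, with the factor codes handled by Corollary \ref{Th2,cor,irr}) over $\Bbb F_{q^2}$, together with Remark \ref{WY} applied via $n\mid(q+1)$ to get $m_{\rho^i}(x)=m_{\rho^{-i}}(x)$. Your write-up supplies precisely these steps, so no further comparison is needed.
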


\begin{rem}
In Theorem \ref{Th3}, let $\lambda^2+(a+d)\lambda+1=(\lambda-\rho)(\lambda-\rho^{-1})$ be irreducible over $\Bbb F_q$, $g(x)=(g_1(x)g_2(x))^s \in \Bbb F_q[x]$ a monic polynomial satisfying the condition in (\ref{eq 2.1}), and $\mathcal L, \mathcal L' \subseteq \Bbb F_q$. Then the conclusions of Theorem \ref{Th3} still hold.

So far, we have given all the answers to the case of $\sigma^j=1$ of Section 2.3.
\end{rem}

\begin{exa} In Example \ref{eg 1}, the minimal polynomials of $\rho^{-3}$, $\rho^3$ $\rho^{-5}$, $\rho^5$, $\rho^{-7}$, and $\rho^7$ over $\Bbb F_2$ are $m_{\rho^{-3}}(x)=x^3+x^2+1$, $m_{\rho^3}(x)=x^3+x+1$, $m_{\rho^{-5}}(x)=m_\rho(x)=x^6+x^5+x^4+x^2+1$, $m_{\rho^5}(x)=m_{\rho^{-1}}(x)=x^6+x^4+x^2+x+1$, $m_{\rho^{-7}}(x)=m_{\rho^7}(x)=x^2+x+1$, respectively. Let $g_1(x)=x+\frac{\alpha^5+\alpha^3}{\alpha^{13}}\in \Bbb F_{2^6}[x]$, $g_2(x)=x+\frac{\alpha^5+\alpha^{-3}}{\alpha^{13}} \in \Bbb F_{2^6}[x]$.

(1) If $g(x)=g_1(x)g_2(x)$, then
\begin{eqnarray*}
&&\widetilde{\Gamma}(\mathcal L, g)=\overline{\Gamma}(\mathcal L', g)\\
&&=\langle(x+1)(x^6+x^4+x^2+x+1)(x^6+x^5+x^4+x^2+x+1)\rangle\\
&&=\langle(x+1)\mbox{lcm}(m_{\rho^{-1}}(x),m_{\rho}(x))\rangle,
\end{eqnarray*}
and they are $[21,8,6]$ cyclic codes by the Magma program.

(2) If $g(x)=g_1(x)g_2(x)^3$ or $g(x)=g_1(x)g_2(x)^5$, then
\begin{eqnarray*}
&&\widetilde{\Gamma}(\mathcal L, g)=\overline{\Gamma}(\mathcal L', g)\\
&&=\langle(x+1)(x^3+x+1)(x^6+x^4+x^2+x+1)(x^6+x^5+x^4+x^2+x+1)\rangle\\
&&=\langle(x+1)\mbox{lcm}(m_{\rho^{-1}}(x),m_{\rho}(x),m_{\rho^3}(x))\rangle,
\end{eqnarray*}
and they are $[21, 5, 10]$ cyclic codes by the Magma program.

(3) If $g(x)=g_1(x)^sg_2(x)^t$, where $(s,t)\in \{(3,3),(3,5),(5,3),(5,5)\}$, then
\begin{eqnarray*}
&&\widetilde{\Gamma}(\mathcal L, g)=\overline{\Gamma}(\mathcal L', g)\\
&&=\langle(x+1)(x^3+x+1)(x^3+x^2+1)(x^6+x^4+x^2+x+1)(x^6+x^5+x^4+x^2+x+1)\rangle\\
&&=\langle(x+1)\mbox{lcm}(m_{\rho^{-1}}(x),m_{\rho^{-3}}(x),m_{\rho}(x),
m_{\rho^3}(x))\rangle,
\end{eqnarray*}
and they are $[21, 2, 14]$ cyclic codes by the Magma program.

(4) If $g(x)=g_1(x)g_2(x)^7$, then
\begin{eqnarray*}
&&\widetilde{\Gamma}(\mathcal L, g)=\overline{\Gamma}(\mathcal L', g)\\
&&=\langle(x+1)(x^2+x+1)(x^3+x+1)(x^6+x^4+x^2+x+1)(x^6+x^5+x^4+x^2+x+1)\rangle\\
&&=\langle(x+1)\mbox{lcm}(m_{\rho^{-1}}(x),m_{\rho}(x),m_{\rho^3}(x),m_{\rho^5}(x),
m_{\rho^7}(x))\rangle,
\end{eqnarray*}
and they are $[21, 3, 12]$ cyclic codes by the Magma program.

(5) If $g(x)=g_1(x)^7g_2(x)$, then
\begin{eqnarray*}
&&\widetilde{\Gamma}(\mathcal L, g)=\overline{\Gamma}(\mathcal L', g)\\
&&=\langle(x+1)(x^2+x+1)(x^3+x^2+1)(x^6+x^4+x^2+x+1)(x^6+x^5+x^4+x^2+x+1)\rangle\\
&&=\langle(x+1)\mbox{lcm}(m_{\rho^{-1}}(x),m_{\rho^{-3}}(x),m_{\rho^{-5}}(x),
m_{\rho^{-7}}(x),m_{\rho}(x))\rangle,
\end{eqnarray*}
and they are $[21, 3, 12]$ cyclic codes by the Magma program.
\end{exa}

Let $M=\left(\left(\begin{array}{cc}
a & b \\
c & d
\end{array}\right),\sigma^j\right) \in P\Gamma L_2(\Bbb F_q)$ and $g(x)=\sum\limits_{i=0}^r g_i x^i \in\Bbb F_q[x]$ a monic polynomial of degree $r$ satisfying the condition in (\ref{eq 2.1}). We gave generator polynomials of cyclic expurgated or extended Goppa codes if $\sigma^j=1$ in this paper. Based on the above discussion, now we have an open question:

If $\sigma^j\ne 1$,  can you show generator polynomials of cyclic expurgated or extended Goppa codes?

\section{Conclusion}
Cyclic codes are an interesting type of linear codes and have wide applications in communication and storage systems due to their efficient encoding and decoding algorithms. Goppa codes are widely used in public-key cryptosystems. Cyclic Goppa codes, cyclic expurgated and extended Goppa codes were  investigated (e.g., \cite{Ba,Bb,Bc,Bd,Be, LY}), but their generator polynomials have not shown so far.

  This study aims to determine all the generator polynomials of cyclic expurgated or extended Goppa codes under some prescribed permutations induced by the projective general linear automorphism $A \in PGL_2(\Bbb F_q)$.
  Future research should be conducted to investigate the generator polynomials of cyclic expurgated or extended Goppa codes under some prescribed permutations induced by the projective semi-linear automorphism $M \in P\Gamma L_2(\Bbb F_q)$.

  \section*{Acknowledgement}
The authors would like to thank the editor and anonymous reviewers for their valuable comments that improved the quality of this paper.

\end{document}